\newcommand{\Hypergeometric}[5]{\ensuremath{{}_{#1}  F_{#2}\mathord{\left[\begin{smallmatrix}#3 \\ #4\end{smallmatrix};#5\right]}}}
\crefname{hypothesis}{Hypothesis}{Hypotheses}
\title{On the concentration of the maximum degree in~the~duplication-divergence models\thanks{Submitted to the editors DATE.
\newline
The preliminary, partial, and weaker versions of these results appeared in WG 2020 \cite{frieze-wg} and COCOON 2021 \cite{frieze-cocoon} conference proceedings.
\funding{This work was funded in part by NSF Center for Science of Information Grant CCF-0939370, and NSF Grants DMS1661063 and CCF-2006440. \newline
Krzysztof Turowski's research was funded by the Polish National Science Center $2020$/$39$/D/ST$6$/$00419$ grant. For the purpose of Open Access, the author has applied a CC-BY public copyright license to any Author Accepted Manuscript (AAM) version arising from this submission. 
Wojciech Szpankowski's work was funded in part by the Polish National Science Center $2018$/$31$/B/ST$6$/$01294$ grant.}}}
\author{Alan Frieze\thanks{Department of Mathematical Sciences, Carnegie Mellon University, Pittsburgh, PA, USA
  (\email{alan@random.math.cmu.edu}).}
\and Krzysztof Turowski\thanks{Theoretical Computer Science Department, Jagiellonian University, Krak\'ow, Poland 
  (\email{krzysztof.szymon.turowski@gmail.com}).}
\and Wojciech Szpankowski\thanks{Center for Science of Information, Department of Computer Science, Purdue University, West Lafayette, IN, USA 
  (\email{spa@cs.purdue.edu}).}}
\newcommand{\E}{\mathbb{E}}
\newcommand{\Var}{\mathrm{Var}}
\newcommand{\parent}[1]{\ensuremath{\textrm{parent}({#1})}}
\newcommand{\DD}{\texttt{DD}}
\newcommand{\A}{\mathcal{A}}
\newcommand{\B}{\mathcal{B}}
\newcommand{\C}{\mathcal{C}}
\begin{document}

\maketitle


\begin{abstract}
We present a rigorous and precise analysis of the maximum degree and the average degree in a dynamic duplication-divergence graph model introduced by Sol\'e, Pastor-Satorras et al. in which the graph grows according to a duplication-divergence mechanism, i.e. by iteratively creating a copy of some node and then randomly alternating the neighborhood of a new node with probability $p$.
This model captures the growth of some real-world processes e.g.~biological or social networks.

In this paper, we prove that for some $0 < p < 1$ the maximum degree and the average degree of a duplication-divergence graph on $t$ vertices are asymptotically concentrated with high probability around $t^p$ and $\max\{t^{2 p - 1}, 1\}$, respectively, i.e. they are within at most a polylogarithmic factor from these values with probability at least $1 - t^{-A}$ for any constant $A > 0$.
\end{abstract}

\begin{keywords}
    random graphs, dynamic graphs, duplication-divergence model, degree distribution, maximum degree, average degree, large deviation
\end{keywords}

\begin{AMS}
  05C07, 05C80, 68R10
\end{AMS}

\section{Introduction}

Studying properties of random graphs is a~popular topic of research in computer science and discrete mathematics since the seminal work of Paul Erd\H{o}s and Alfr{\'e}d R\'enyi \cite{erdos-renyi}.
This model was studied extensively using various probabilistic and analytic methods.
The research mostly concentrated on a few broad topics: distribution of structural properties of graphs (e.g. the number of edges, degrees of fixed vertex, maximum degree, diameter), the existence of special subgraphs (e.g. motif counting, longest paths, maximum matching, Hamilton cycles), values of well-known combinatorial parameters (e.g. largest independent set, chromatic number), or extremal properties (Ramsey- and Tur\'an-type) -- see e.g. surveys of results in \cite{bollobas-random,frieze-introduction,luczak-random,hofstad}.

The widening array of application domains ranging from biology to finance to social science inspired further directions of research: first, there appeared an idea to bring the models to the real-world data and to study important aspects, such as centrality, degree correlation, community detection, or graph compression \cite{kaminski-pralat,latora-nicosia,networks-overview}.
Second, more models of random networks were developed e.g. for inhomogeneous random graphs, geometric random graphs, preferential attachment graphs, or duplication graphs \cite{chung-book,frieze-introduction,hofstad}.
Often, these models were inspired by some generation mechanisms (e.g. rich-get-richer), or properties (e.g. scale-free/power-law property) that were claimed at work for the real-world networks \cite{faloutsos}.

In particular, since the late 1990s attention turned toward dynamics graphs in which
the behavior of networks evolves in time, e.g. when sets of vertices and/or edges are functions of time, which is definitely the case for certain biological (e.g. protein-protein networks) and social networks (e.g. graph of citations).

One of the family of such networks is the so-called duplication models \cite{chung-duplication,chung-book}.
It was observed that the evolutionary dynamics of protein interaction networks can be described by simple duplication and mutation rules \cite{ohno,zhang-evolution}.
For example, the main mechanisms in such models are duplication and divergence:
when vertices arrive one by one, they are created as copies of some already existing node, chosen uniformly at random (duplication), and then the neighborhood is typically altered randomly according to some predefined rules (divergence).

In this paper we study a particular duplication-divergence model, first introduced by Sol\'e, Pastor-Satorras et al. \cite{sole-model}.
This model is a promising object of inquiry since it has been shown empirically that its degree distribution, small subgraph (graphlets) counts, and the number of symmetries fit very well the structure of some real-world biological and social networks, e.g. protein-protein and citation networks. More precisely, there exist heuristics to infer the underlying parameters of the model from various biological networks, which enable us to generate similar graphs in terms of degree distributions, k-hop reachability, closeness, betweenness, and graphlet frequency \cite{hormozdiari-seed,li-mle} (see also an alternative method of parameter estimation in \cite{turowski-kdd}).
It also turns out that this model often outperformed alternative ones in terms of systematically replicating the degree distribution, small subgraph (graphlets) counts, and symmetries of the input networks \cite{colak-dense,shao-choosing,turowski-kdd}.
This suggests a possible real-world significance for the duplication-divergence model, which further motivates the studies of its structural properties.

However, it is also one of the least understood models, much less so than the Erd\H{o}s-R\'enyi or preferential attachment models.
At the moment there exist only a handful of precise results related to the behavior of the degree distribution of the graphs generated by this model.
Our contribution is a step towards closing this gap.
In short, we prove an asymptotic tight concentration of two parameters in duplication-divergence graphs: maximum degree, and average degree (or, equivalently, the number of edges) around their mean values.

The paper is organized as follows: in \Cref{sec:model} we define formally the duplication-divergence model, and we present an overview of the previous results related to the properties of the degree distribution.
Then, in \Cref{sec:maximum} we introduce our result for the maximum degree, with proof split into three parts: 
in \Cref{sec:maximum-upper-early} and \Cref{sec:maximum-upper-late} we prove upper bounds for the degrees of the earliest and later vertices arriving in the graph, respectively, and in \Cref{sec:maximum-lower} we give a~proof of the lower bound for the degree of the first vertices, which is effectively also the lower degree of the maximum degree.
Next, we proceed with \Cref{sec:average}, containing the proofs of the upper and the lower bounds for the average degree (or, equivalently, the total number of edges in the graph), respectively.
Finally, we offer some further problems and hypotheses that stem from our current research.

This work is a substantial extension of two conference papers: the one presented at COCOON 2021 \cite{frieze-cocoon} which contained the weaker concentration results for maximum degree only for the case $\frac{1}{2} < p < 1$, and the one presented at WG 2020 \cite{frieze-wg} which contained the weaker claims (proved using different methods) for average degree and for degrees only of the earliest vertices.

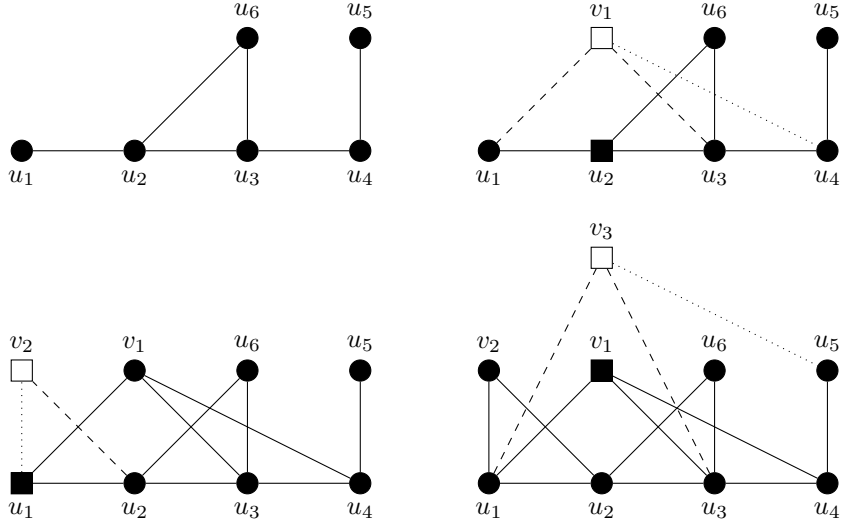
\begin{figure}[htbp]
  \tikzstyle{every node}=[draw, circle, fill=black, inner sep=0pt, minimum width=8pt]
  \centering
  \begin{subfigure}[b]{0.45\textwidth}
    \centering
    \begin{tikzpicture}[scale=1.5]
      \node (U1) at (-2, 0) [label=below:$u_1$]{};
      \node (U2) at (-1, 0) [label=below:$u_2$]{};
      \node (U3) at (0, 0) [label=below:$u_3$]{};
      \node (U4) at (1, 0) [label=below:$u_4$]{};
      \node (U5) at (1, 1) [label=$u_5$]{};
      \node (U6) at (0, 1) [label=$u_6$]{};
      \draw (U1) -- (U2) -- (U3) -- (U4) -- (U5);
      \draw (U2) -- (U6) -- (U3);
    \end{tikzpicture}
  \end{subfigure}
  \quad
  \begin{subfigure}[b]{0.45\textwidth}
    \centering
    \begin{tikzpicture}[scale=1.5]
      \node (U1) at (-2, 0) [label=below:$u_1$]{};
      \node[rectangle,minimum size=8pt] (U2) at (-1, 0) [label=below:$u_2$]{};
      \node (U3) at (0, 0) [label=below:$u_3$]{};
      \node (U4) at (1, 0) [label=below:$u_4$]{};
      \node (U5) at (1, 1) [label=$u_5$]{};
      \node (U6) at (0, 1) [label=$u_6$]{};
      \node[rectangle,minimum size=8pt,fill=white] (V1) at (-1, 1) [label=$v_1$]{};
      \draw (U1) -- (U2) -- (U3) -- (U4) -- (U5);
      \draw (U2) -- (U6) -- (U3);
      \draw[dashed] (U3) -- (V1) -- (U1);
      \draw[dotted] (V1) -- (U4);
    \end{tikzpicture}
  \end{subfigure}
  \\\vspace{8pt}
  \begin{subfigure}[b]{0.45\textwidth}
    \centering
    \begin{tikzpicture}[scale=1.5]
      \node (U2) at (-1, 0) [label=below:$u_2$]{};
      \node (U3) at (0, 0) [label=below:$u_3$]{};
      \node (U4) at (1, 0) [label=below:$u_4$]{};
      \node (U5) at (1, 1) [label=$u_5$]{};
      \node (U6) at (0, 1) [label=$u_6$]{};
      \node (V1) at (-1, 1) [label=$v_1$]{};
      \node[rectangle,minimum size=8pt] (U1) at (-2, 0) [label=below:$u_1$]{};
      \node[rectangle,minimum size=8pt,fill=white] (V2) at (-2, 1) [label=$v_2$]{};
      \draw (U1) -- (U2) -- (U3) -- (U4) -- (U5);
      \draw (U2) -- (U6) -- (U3);
      \draw (U3) -- (V1) -- (U1);
      \draw (V1) -- (U4);
      \draw[dashed] (U2) -- (V2);
      \draw[dotted] (U1) -- (V2);
    \end{tikzpicture}
  \end{subfigure}
  \quad
  \begin{subfigure}[b]{0.45\textwidth}
    \centering
    \begin{tikzpicture}[scale=1.5]
      \node (U1) at (-2, 0) [label=below:$u_1$]{};
      \node (U2) at (-1, 0) [label=below:$u_2$]{};
      \node (U3) at (0, 0) [label=below:$u_3$]{};
      \node (U4) at (1, 0) [label=below:$u_4$]{};
      \node (U5) at (1, 1) [label=$u_5$]{};
      \node (U6) at (0, 1) [label=$u_6$]{};
      \node[rectangle,minimum size=8pt] (V1) at (-1, 1) [label=$v_1$]{};
      \node (V2) at (-2, 1) [label=$v_2$]{};
      \node[rectangle,minimum size=8pt,fill=white] (V3) at (-1, 2) [label=$v_3$]{};
      \draw (U1) -- (U2) -- (U3) -- (U4) -- (U5);
      \draw (U2) -- (U6) -- (U3);
      \draw (U3) -- (V1) -- (U1);
      \draw (V1) -- (U4);
      \draw (U2) -- (V2);
      \draw (U1) -- (V2);
      \draw[dashed] (U1) -- (V3) -- (U3);
      \draw[dotted] (V3) -- (U5);
    \end{tikzpicture}
  \end{subfigure}
  \caption{Graph evolution in the duplication-divergence model: new vertices and their parents are marked as white and black squares, respectively; $p$-edges and $r$-edges are denoted by dashed and dotted lines.}
  \label{fig:example-2}
\end{figure}

\section{Model definition and earlier work}
\label{sec:model}

Throughout the paper we use standard graph notation from \cite{diestel}, e.g. $V(G)$ denotes the vertex set of a graph $G$, $\deg_G(s)$ is the degree of node $s$ in $G$, and we write $\Delta(G)$ and $D(G)$ for the maximum degree and the average degree in $G$. Let also $N_G(s)$ denote the open neighborhood of $s$ in $G$.
All graphs considered in the paper are simple, i.e. without loops or multiple edges.

Additionally, since we are eventually dealing with a probability space over graphs on $t$ vertices, let $G_t$ denote a random variable representing a graph on $t$ vertices.
Finally, since we are dealing with graphs growing sequentially, we assume that the vertices are identified with the natural numbers according to their arrival time.
For simplicity, we introduce the notation $\deg_t(s)$ for the random variable denoting the degree of vertex $s$ in $G_t$.
Clearly, $\Delta(G_t)$ and $D(G_t)$ are random variables denoting the maximum degree and the average degree in $G_t$.

Let us now formally define the duplication-divergence model, denoted $\DD(t, p, r)$, introduced by Sol\'e et al. \cite{sole-model,pastor-evolving}.
Let $G_{t_0}$ be some graph on $t_0 \le t$ vertices, with vertices having distinct labels from $1$ to $t_0$.
Now, for every $i = t_0, t_0 + 1, \ldots, t - 1$ we create $G_{i + 1}$ from $G_i$ 
according to the following rules:
\begin{enumerate}
  \item we add a new vertex with label $i + 1$ to the graph,
  \item we choose a vertex $u$ from $G_i$ uniformly at random -- and we denote $u$ as $\parent{i + 1}$,
  \item for every vertex $v$:
  \begin{enumerate}
    \item if $v$ is adjacent to $u$ ($v \in N_{G_i}(u)$) in $G_i$, then add an edge between $v$ and $i + 1$ with probability $p$,
    \item if $v$ is not adjacent to $u$ in $G_i$ ($v \notin N_{G_i}(u)$), then add an edge between $v$ and $i + 1$ with probability $\frac{r}{i}$. Note that this case also occurs when $v = u$, since $u \notin N_{G_i}(u)$.
  \end{enumerate}
\end{enumerate}
All edge additions are independent Bernoulli random variables.

Since both $p$ and $\frac{r}{i}$ for $i = t_0, \ldots, t - 1$ are probabilities, we allow the parameter space to be $p \in [0, 1]$ and $r \in [0, t_0]$.

There is indeed a duplication-divergence mechanism at work since we can think of the equivalent set of rules in the form ``copy a vertex from $G_i$ uniformly at random'', ``remove its neighbors independently at random with probability $1 - p$'', and ``add edges to all other vertices independently at random with probability $\frac{r}{i}$''.

Throughout the paper we will refer to the standard Big-O Landau notation, as popularized e.g. in \cite{knuth}. Let us recall its basic notion: $f(n) = O(g(n))$ for some functions $f$ and $g$ such that $\exists_{k > 0} \exists_{n_0} \forall_{n > n_0} |f(n)| \le |k \cdot g(n)|$.
Additionally, we will use
\begin{itemize}
    \item $f(n) = \Omega(g(n))$ when $g(n) = O(f(n))$,
    \item $f(n) = \Theta(g(n))$ when both $f(n) = O(g(n))$ and $g(n) = O(f(n))$,
    \item $f(n) = o(g(n))$ when $f(n) = O(g(n))$ but not $f(n) = O(g(n))$.
\end{itemize}
Intuitively, $f(n) = \Omega(g(n))$ when $\lim_{n \to \infty} \frac{|f(n)|}{|g(n)|} \in [k_1, k_2]$ for some $0 < k_1 < k_2$.
Since in the model both $p$ and $r$ (and the order of initial graph $G_{t_0}$) are constants, the asymptotic results are given exclusively in terms of $t$.

As it was mentioned earlier, there are only a few rigorous results for the $\DD(t, p, r)$ model and its special cases.
For $0 < p < 1$ and $r = 0$, it was proved in \cite{hermann} that asymptotically there exists a phase transition for the limiting distribution of degree frequencies: if $p \le p^*$, then almost all vertices are isolated, i.e. the number of non-isolated vertices in $G_t$ is $o(t)$, and if $p > p^*$, then only a constant fraction of vertices (with an explicit constant) are isolated.
Moreover, it was proved that for any $k$ the fraction of vertices of degree $k$ in $G_t$ converges to $0$, and therefore there is no limiting degree distribution for $p > p^*$.
From \cite{li-degree} it is known that the number of vertices of degree one in $G_t$
is $\Omega(\log{t})$ but again the precise rate of growth of the number of vertices
with any fixed degree $k > 0$ is currently unknown.

However, also for the same case in \cite{jordan,jacquet-aofa} it was shown for $p < \exp(-1)$ that the (only) connected component in $G_t$ exhibits a power-law property with the scale parameter $\gamma$ which is the solution of $3 = \gamma + p^{\gamma - 2}$.

For the general case, the two main parameters under consideration were the degree of fixed vertices $\deg_t(s)$ and the average degree of $G_t$ defined as
\begin{align*}
    D(G_t)= \frac{1}{t} \sum_{s=1}^t \deg_t(s).
\end{align*}

It was shown in \cite{turowski-expected} that we can solve the recurrence equation for the expected average degree and obtain
\begin{theorem}
\label{thm:exp_avg_deg}
For $t \to \infty$ it holds that
\begin{align*}
  \E[D(G_t)] & =
  \begin{cases}
    \Theta(1) & \text{if $p < \frac{1}{2}$ and $r > 0$,} \\
    \Theta(\ln{t}) & \text{if $p = \frac{1}{2}$ and $r > 0$,} \\
    \Theta(t^{2 p - 1}) & \text{otherwise.}
  \end{cases}
\end{align*}
\end{theorem}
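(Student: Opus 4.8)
The plan is to reduce the claim about the average degree to one about the expected number of edges, and then to solve a single scalar linear recurrence. Let $M_t$ denote the (random) number of edges of $G_t$ and set $m_t = \E[M_t]$. By the handshake identity $\sum_{s=1}^t \deg_t(s) = 2 M_t$, so $D(G_t) = 2 M_t / t$ and it suffices to pin down $m_t$ up to constant factors.

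First I would compute the expected number of edges created in one step. Conditioning on $G_i$ and on the parent $u = \parent{i+1}$, the divergence rules attach the new vertex to each of the $\deg_i(u)$ neighbors of $u$ with probability $p$ and to each of the remaining $i - \deg_i(u)$ vertices (the non-neighbors, $u$ included) with probability $r/i$, all independently. Hence
\[
\E[M_{i+1} - M_i \mid G_i, u] = p\,\deg_i(u) + \frac{r}{i}\bigl(i - \deg_i(u)\bigr).
\]
Averaging over the uniform parent, using $\frac1i\sum_u \deg_i(u) = 2M_i/i$, and then taking expectations over $G_i$ collapses this to
\[
m_{i+1} = \Bigl(1 + \frac{2p}{i} - \frac{2r}{i^2}\Bigr) m_i + r,
\qquad i = t_0,\dots,t-1.
\]

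The core of the proof is the asymptotic solution of this recurrence. Writing $a_j = 1 + \frac{2p}{j} - \frac{2r}{j^2}$ and $P_t = \prod_{j=t_0}^{t-1} a_j$, the standard first-order formula gives
\[
m_t = m_{t_0}\,P_t + r \sum_{k=t_0}^{t-1} \frac{P_t}{P_{k+1}}.
\]
I would first establish $P_t = \Theta(t^{2p})$ by taking logarithms: $\ln a_j = \frac{2p}{j} + O(j^{-2})$, so $\sum_j \ln a_j = 2p\ln t + C + o(1)$, the $O(j^{-2})$ tail (which contains the whole effect of the $-2r/j^2$ term) being summable and hence only shifting the constant, not the exponent. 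Then $P_t/P_{k+1} = \Theta\bigl((t/k)^{2p}\bigr)$, whence
\[
m_t = \Theta(t^{2p}) + r\,\Theta(t^{2p}) \sum_{k=t_0}^{t-1} k^{-2p}.
\]
The three regimes now follow from the growth of $\sum_k k^{-2p}$ by integral comparison: for $p > 1/2$ the series converges, so $m_t = \Theta(t^{2p})$ and $D(G_t) = \Theta(t^{2p-1})$; for $p = 1/2$ it is $\Theta(\ln t)$, so $m_t = \Theta(t\ln t)$ and $D(G_t) = \Theta(\ln t)$; and for $p < 1/2$ it is $\Theta(t^{1-2p})$, so the sum term dominates the $m_{t_0}P_t$ term, giving $m_t = \Theta(t)$ and $D(G_t) = \Theta(1)$ (note that $r > 0$ is exactly what makes this term present). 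The ``otherwise'' branch with $r = 0$ simply drops the sum and leaves $m_t = m_{t_0}P_t = \Theta(t^{2p})$, i.e. $D(G_t) = \Theta(t^{2p-1})$, as long as $m_{t_0} > 0$.

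I expect the only delicate point to be the uniform two-sided control of $P_t$: one must check that the $-2r/j^2$ correction and the error in $\ln(1+x)\approx x$ contribute only a bounded multiplicative factor, so that $P_t$ is trapped between two positive constant multiples of $t^{2p}$. Everything else---the handshake reduction, the one-step expectation, and the integral estimate of $\sum_k k^{-2p}$---is routine. A small bookkeeping caveat is the initial condition in the $r=0$ branch: if $G_{t_0}$ is edgeless then $m_t\equiv 0$, so that branch tacitly assumes $m_{t_0}>0$.
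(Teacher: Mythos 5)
Your proof is correct, and it is essentially the route the paper itself takes: the paper does not prove this theorem in-text but quotes it from \cite{turowski-expected}, which it describes exactly as solving ``the recurrence equation for the expected average degree''---the same first-order linear recurrence $m_{i+1} = \left(1 + \frac{2p}{i} - \frac{2r}{i^2}\right) m_i + r$ for the expected edge count that you derive via the handshake identity and the one-step conditional expectation. The practical difference is precision: the cited work solves the recurrence exactly, producing closed-form leading constants (Gamma-function and hypergeometric expressions, visible in the formula for $\E[\deg_\tau(\tau)]$ that the paper recalls at the start of \Cref{sec:average}), and those explicit constants $C^*$ are what the later concentration arguments (\Cref{thm:ub_avg_deg} and \Cref{thm:lb_avg_deg}) lean on; your purely asymptotic treatment---logarithms of the product with summable $O(j^{-2})$ corrections, uniform control $P_t/P_{k+1} = \Theta\bigl((t/k)^{2p}\bigr)$, and integral comparison of $\sum_k k^{-2p}$---is more elementary and self-contained, but yields only the existence of such constants rather than their values. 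The caveat you flag, that the $r=0$ branch tacitly requires $m_{t_0} > 0$, is genuine: the paper acknowledges this degenerate case only implicitly, in \Cref{sec:maximum-lower}, where it notes that the empty-start, $r = 0$ process never creates an edge and excludes it from consideration.
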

In a similar fashion it was shown that the expected degree of a vertex $s$ is given by the following theorem:
\begin{theorem}
\label{thm:exp_given_deg}
For $t \to \infty$, it holds that
\begin{align*}
  \E[\deg_{t}(s)] & =
  \begin{cases}
    \Theta\left(\log{\left(\frac{t}{s}\right)}\right) & \text{if $p = 0$ and $r > 0$,} \\
    \Theta\left(\left(\frac{t}{s}\right)^p\right) & \text{if $0 < p < \frac{1}{2}$ and $r > 0$,} \\
    \Theta\left(\sqrt{\frac{t}{s}} \log{s}\right) & \text{if $p = \frac{1}{2}$ and $r > 0$,} \\
    \Theta\left(\left(\frac{t}{s}\right)^p s^{2 p - 1}\right) & \text{otherwise.}
  \end{cases}
\end{align*}
\end{theorem}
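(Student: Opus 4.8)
The plan is to set up and solve a one-step linear recurrence for $d_i := \E[\deg_i(s)]$ as the graph grows from $G_i$ to $G_{i+1}$ (for $i \ge s$). Conditioning on $G_i$ and on the uniformly chosen parent $u$, the new vertex $i+1$ attaches to $s$ with probability $p$ when $u \in N_{G_i}(s)$ and with probability $\frac{r}{i}$ otherwise; since $\Pr[u \in N_{G_i}(s) \mid G_i] = \frac{\deg_i(s)}{i}$, taking expectations over $u$ and then over $G_i$ yields
\begin{align*}
d_{i+1} = d_i\left(1 + \frac{p}{i} - \frac{r}{i^2}\right) + \frac{r}{i}.
\end{align*}
First I would solve this with the integrating factor $\Pi_i := \prod_{j=s}^{i-1}\bigl(1 + \frac{p}{j} - \frac{r}{j^2}\bigr)$ (so $\Pi_s = 1$). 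Taking logarithms and comparing $\sum_{j=s}^{i-1}\frac{p}{j}$ with the corresponding integral shows that the correction $-\frac{r}{j^2}$ only affects the multiplicative constant, so $\Pi_i = \Theta\bigl((i/s)^p\bigr)$. Substituting $d_i = \Pi_i e_i$ collapses the recurrence to $e_{i+1} = e_i + \frac{r}{i\,\Pi_{i+1}}$, and summing gives the closed form
\begin{align*}
d_i = \Pi_i\left(d_s + r\sum_{j=s}^{i-1}\frac{1}{j\,\Pi_{j+1}}\right).
\end{align*}

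The next step is to pin down the initial value $d_s = \E[\deg_s(s)]$, the expected degree of $s$ at birth. Its parent is a uniformly random vertex of $G_{s-1}$, whose expected degree is exactly $\E[D(G_{s-1})]$, so a direct computation gives $d_s = p\,\E[D(G_{s-1})] + r\bigl(1 - \frac{1}{s-1}\E[D(G_{s-1})]\bigr)$. By \Cref{thm:exp_avg_deg} this means $d_s = \Theta(1)$ for $0 \le p < \frac{1}{2}$, $d_s = \Theta(\ln s)$ for $p = \frac{1}{2}$, and $d_s = \Theta(s^{2p-1})$ in the remaining cases. Crucially, because $\frac{1}{s-1}\E[D(G_{s-1})] \to 0$ for $0 < p < 1$, the $p\,\E[D(G_{s-1})]$ term controls the order whenever $p>0$, and the $r$ term controls it when $p=0$.

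With $\Pi_i = \Theta\bigl((i/s)^p\bigr)$ the accumulated-$r$ contribution behaves like $r s^p\sum_{j=s}^{i-1} j^{-(1+p)}$: for $p = 0$ this sum is $\Theta(\log(i/s))$, whereas for $p > 0$ it converges and the prefactor $\Pi_i$ makes the particular part $\Theta\bigl((i/s)^p\bigr)$. The four regimes then follow by evaluating at $i = t$ and comparing this particular part against the homogeneous part $\Pi_t\,d_s = \Theta\bigl((t/s)^p d_s\bigr)$. For $p=0$ the accumulated $r$-edges dominate and give $\Theta(\log(t/s))$; for $0<p<\frac{1}{2}$ both parts are $\Theta\bigl((t/s)^p\bigr)$; for $p=\frac{1}{2}$ the homogeneous part dominates and carries the extra $\log s$ inherited from $d_s$, giving $\Theta\bigl(\sqrt{t/s}\,\log s\bigr)$; and otherwise the growing initial degree $d_s = \Theta(s^{2p-1})$ dominates, yielding $\Theta\bigl((t/s)^p s^{2p-1}\bigr)$.

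The main obstacle I expect is the uniform, two-sided control of $\Pi_i$ needed to upgrade the estimates from $O$ to $\Theta$: one must establish matching bounds $c_1 (i/s)^p \le \Pi_i \le c_2 (i/s)^p$ valid across the whole range $s \le i \le t$, where for small $s$ the terms $\log\bigl(1 + \frac{p}{j} - \frac{r}{j^2}\bigr)$ and the discreteness of the sum are most delicate, and where for $t/s$ close to $1$ the short tail sum $\sum_{j=s}^{t-1} j^{-(1+p)}$ must still be bounded below in the right order. The other subtle point is correctly locating, in the boundary case $p = \frac{1}{2}$, the source of the logarithmic factor: it comes from the average degree through the initial condition $d_s$, not from the (convergent) tail sum. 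The rest is routine bookkeeping with the convergence versus divergence of $\sum_j j^{-(1+p)}$.
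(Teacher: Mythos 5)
Your proposal is correct, and it takes essentially the same route as the proof behind the statement: the paper does not prove this theorem itself but imports it from \cite{turowski-expected}, which—as the paper explicitly notes—obtains it by setting up and solving exactly the one-step linear recurrence you write down, with the initial value $\E[\deg_s(s)]$ pinned down through the expected average degree at time $s$ (your use of \Cref{thm:exp_avg_deg}). The only substantive difference is precision: the cited work solves the recurrence exactly (yielding the convoluted Gamma-function/hypergeometric leading constants quoted in \Cref{sec:average}), whereas you run the same computation at $\Theta$-level with uniform two-sided control of the product $\Pi_i$, which is all the stated asymptotics require.
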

Clearly, the latter result for the earliest vertices implies that the expected maximum degree is $\Omega(t^p)$ for all $0 < p < 1$.

In fact, in \cite{turowski-expected} the authors obtained more than  just \Cref{thm:exp_avg_deg} and \Cref{thm:exp_given_deg}, because they derived the exact formulae for both $\E[D(G_t)]$ and $\E[\deg_{t}(s)]$ with their very convoluted leading coefficients (depending on $s$, $p$, $r$) together with the asymptotics for $\Var[D(G_t)]$ and $\Var[\deg_{t}(s)]$.

The natural question then is to show that these random variables are concentrated, i.e. whether by moving only some small (e.g. polylogarithmic) factor from the mean we could observe the polynomial tail decay.
Intuitively, for the later vertices we should not expect such a phenomenon: since the parent of a new vertex is drawn uniformly, and there are two binomial processes on top of it, we expect the degree distribution of $\deg_t(t)$ rather reflect the whole degree distribution, which for some cases we know (and for all other we stipulate, based on simulations) is not concentrated.
However, as we will see in the next sections for the maximum degree and the average degree we can answer this question in the affirmative.

\section{Maximum degree}
\label{sec:maximum}

In this section we present our main result concerning the concentration of the maximum degree $\Delta(G_t)$. We formulate it in the next theorem.

\begin{theorem}
Let $0 < p < 1$. Asymptotically for $G_t \sim \DD(t, p, r)$
\begin{align*}
  \Pr&[(1 - \alpha) t^p \le \Delta(G_t) \le (1 + \alpha) t^p \log^{2 - p^2}(t)] = 1 - O(t^{-A})
\end{align*}
for any constants $\alpha > 0$ and $A > 0$.
\end{theorem}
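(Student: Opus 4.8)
The plan is to establish the two inequalities separately, handling the upper bound by a union bound over all $t$ vertices and the lower bound by examining the earliest-arriving vertices. Throughout I would work with the one-step degree recurrence: conditioned on $G_i$, the new vertex $i+1$ attaches to a fixed vertex $s$ precisely when the parent $u$ is a neighbor of $s$ (probability $\deg_i(s)/i$) and the $p$-coin succeeds, or $u$ is not a neighbor of $s$ and the $r/i$-coin succeeds. Hence $\Pr[\deg_{i+1}(s) = \deg_i(s)+1 \mid G_i] = p\frac{\deg_i(s)}{i} + \frac{r}{i}\bigl(1 - \frac{\deg_i(s)}{i}\bigr)$, which I will bound above by $\frac{p\deg_i(s)+r}{i}$ and below by $\frac{p\deg_i(s)}{i}$. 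This exhibits each degree as a reinforced (Yule-type) birth process whose mean grows like $\prod_{i}(1+p/i) = \Theta((t/s)^p)$, consistent with \Cref{thm:exp_given_deg}.

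For the upper bound I would control the upper tail of each $\deg_t(s)$ with an exponential supermartingale. Writing $\deg_{i+1}(s) = \deg_i(s) + B_i$ with $B_i$ a conditional Bernoulli of parameter at most $(p\deg_i(s)+r)/i$, one gets $\E[e^{\theta_{i+1}\deg_{i+1}(s)}\mid G_i] \le \exp\bigl(\deg_i(s)[\theta_{i+1} + (e^{\theta_{i+1}}-1)p/i] + (e^{\theta_{i+1}}-1)r/i\bigr)$; choosing the time-varying rate $\theta_i$ so that $\theta_{i+1} + (e^{\theta_{i+1}}-1)p/i = \theta_i$, i.e. $\theta_i \approx \theta_t (t/i)^p$, makes the $\deg_i(s)$-dependence telescope and leaves only a deterministic drift from the $r$-term and from the initial condition. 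Markov's inequality then yields a tail of the form $\Pr[\deg_t(s) > \lambda\,\sigma_s] \le e^{-\Theta(\lambda)}$ about the appropriate scale $\sigma_s$. Taking $\lambda$ of order $\log t$ makes each tail polynomially small, and summing over all $t$ vertices — separating the few early vertices (large scale, few terms) from the many late vertices (small scale, many terms), as in \Cref{sec:maximum-upper-early} and \Cref{sec:maximum-upper-late} — gives the global bound; the precise balance between the number of vertices and the per-vertex exponent is what should produce the polylogarithmic factor $\log^{2-p^2}(t)$.

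The lower bound is more delicate, because a single vertex does \emph{not} concentrate: since attachment to $s$ is proportional to $\deg_i(s)$, the normalized degree $\deg_t(s)/t^p$ converges to a nondegenerate random limit $W_s$ (as in preferential attachment), so $\Pr[\deg_t(1) \ge (1-\alpha)t^p]$ stays bounded away from $1$. I would therefore lower bound $\Delta(G_t)$ by the maximum over the first $m = \Theta(\log t)$ vertices and argue in two phases. In a short first phase up to time $t_1 = t^{\epsilon}$, each early vertex reaches a fixed fraction of its scale, $\deg_{t_1}(s) \ge \delta t_1^p$, with at least constant probability; taking the maximum over the $\Theta(\log t)$ vertices boosts this to probability $1 - O(t^{-A})$, after controlling the dependence between these vertices by revealing the parent choices and coin flips in an order that exposes near-independent contributions. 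In the long second phase, a vertex already carrying a \emph{large} degree $d^\ast \to \infty$ evolves as a superposition of $d^\ast$ independent reinforced sub-processes, so by the law of large numbers its further growth concentrates and it attains $(1-\alpha)t^p$ with probability $1 - O(t^{-A})$; this is precisely the mechanism that restores the concentration lost at the single-vertex level. Matching the leading constant to exactly $t^p$ requires carrying the explicit coefficient from \Cref{thm:exp_given_deg} through both phases.

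The main obstacle, on both sides, is the \emph{dependence created by inheritance}: with probability $p$ per neighbor the degree of a new vertex copies that of its parent, so degrees of different vertices are correlated through the tree of parent choices, and a vertex can inherit an atypically large degree. On the upper side this heavy-tailed inheritance is what forces the extra logarithmic factor and must be tracked through a recursive union bound over arrival times; on the lower side it is what obstructs a clean independence argument among the early vertices and necessitates the two-phase decoupling above. I expect the careful quantification of this inheritance — showing that the inherited contribution cannot push any degree beyond $t^p\log^{2-p^2}(t)$ while still permitting some early vertex to reach $(1-\alpha)t^p$ — to be the crux of the argument.
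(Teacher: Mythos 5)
Your upper bound takes a genuinely different but viable route: the time-varying exponential supermartingale with $\theta_i \approx \theta_t (t/i)^p$ is a standard alternative to the paper's construction, which instead discretizes time into intervals (\Cref{def:sequences}), runs a Chernoff bound on each interval (\Cref{col:chernoff-upper}), and controls late vertices by an induction over intervals (\Cref{sec:maximum-upper-late}). I see no obstacle of principle there, only deferred details.

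The genuine gap is in your lower bound, at the amplification step of phase 1. You ask each of the first $m = \Theta(\log t)$ vertices to reach $\delta t_1^p$ by time $t_1 = t^{\epsilon}$ with constant probability, and then boost over the $m$ vertices. But the natural scale of vertex $s$ at time $t_1$ is $\Theta((t_1/s)^p)$ for $0 < p < \frac{1}{2}$, $r > 0$, and $\Theta((t_1/s)^p s^{2p-1}) = \Theta(t_1^p s^{p-1})$ otherwise (\Cref{thm:exp_given_deg}); so the event $\deg_{t_1}(s) \ge \delta t_1^p$ asks vertex $s$ to exceed its own mean by a factor of order $s^p$ (respectively $s^{1-p}$). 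By the very tail bound your supermartingale produces, $\Pr[\deg_{t_1}(s) > \lambda \sigma_s] \le e^{-\Theta(\lambda)}$ about the scale $\sigma_s$, this success probability decays at least like $\exp(-c\,s^{\gamma})$ for some $\gamma>0$, so among $s \le m$ the expected number of successes is $O(1)$, dominated by a number of initial vertices that depends on $\delta$ but not on $t$. Even granting full independence, the probability that no early vertex succeeds is at least the product of the failure probabilities, which is bounded below by a positive constant independent of $t$ and $m$: the maximum over $\Theta(\log t)$ early vertices cannot be boosted to $1 - O(t^{-A})$, because only $O(1)$ of them are genuine trials at the top scale $t_1^p$. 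Enlarging $m$ only adds worse trials; shrinking $\delta$ with $t$ destroys phase 2, which then delivers only $(1-o(1))\,\delta t^p$ rather than $(1-\alpha)t^p$; and reading ``its scale'' as $(t_1/s)^p$ instead does give constant success probabilities, but then phase 2 yields only $\delta (t/s)^p$, which for $s = \Theta(\log t)$ is $\delta t^p \log^{-p} t$. In effect, your phase 2 (which is fine: concentration once the degree is large) reduces the claim at time $t$ to the identical claim at time $t_1$, and phase 1 is exactly the missing base case.

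For comparison: your starting observation --- that $\deg_t(s)/t^p$ has a nondegenerate limit, so no single fixed vertex concentrates from below --- is correct, and it cuts against the paper's own argument as much as against yours. The paper's \Cref{thm:lower-bound-early} tracks a single early vertex, and its induction invokes \Cref{lem:chernoff-lower}, whose hypothesis is the conditioning $\deg_{t_i}(s) \ge X_{t_i}$, under the weaker event $\C_i(s) = [\deg_{t_i}(s) > X_{t_i} - \phi + 1]$; at $i=0$ that weaker event only gives degree at least $2$, while the lemma needs degree at least $X_{t_0} = \phi = \ln^{1+p} t$, so the growth rate actually available is too small by a factor of order $\phi$ and the first inductive steps do not close as written. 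So you have correctly identified the central difficulty of the lower bound at the constant level $(1-\alpha)t^p$ --- it is a difficulty for the paper's single-vertex approach too --- but your two-phase scheme does not overcome it.
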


We prove separately a lower bound and a matching (within a polylogarithmic factor) upper bound.
The main idea of the upper bound proof, presented in the next subsection, is as follows:
we first in \Cref{def:sequences} introduce auxiliary deterministic sequences $(t_i)_{i = 0}^k$ and $(X_{t_i})_{i = 0}^k$ such that $t_0 < \ldots < t_{k - 1} < t \le t_k$.
Although at first glance the dependency between $(t_i)_{i = 0}^k$ and $(X_{t_i})_{i = 0}^k$ given in this definition could seem very convoluted, the intuition behind it is very simple: by doing this we can prove with little effort that $X_{t_i}$ grows close to $t_i^p$, provided that we choose the right parameters.
Indeed, we show that $X_t \le (1 + \alpha) t^p \log^{2 - p^2}(t)$ for any constant $\alpha > 0$.

This way, we want $(X_{t_i})_{i = 0}^k$ to be a good (i.e. holding with high probability) upper bound for $\deg_{t_i}(s)$ for all $i = 0, \ldots, k$ and all $s \le t_0$ (denoted as \emph{early vertices}), which in turn should give us a similar lower bound $\deg_t(s)$ in terms of $X_t$ whp.
We proceed in two major steps: first, by construction, we have $\deg_{t_0}(s) \le t_0 = X_{t_0}$, and second, we prove a bound on $\deg_{t_{i + 1}}(s) - \deg_{t_i}(s)$ that ensures it does not exceed $X_{t_{i + 1}} - X_{t_i}$ with high probability. The latter part is achieved by providing an adequate upper bounding of $\deg_{t_{i + 1}}(s) - \deg_{t_i}(s)$ by a sum of independent Bernoulli variables, so the Chernoff bound can be employed -- and by applying a telescoping sum we establish that $\deg_t(s) \le X_t$ with high probability for all $s \le t_0$.
Therefore, we find for early vertices $s$ (i.e. $s \le t_0$) a Chernoff-type bound on the growth of $\deg_\tau(s)$ over an interval of certain length $h$.

The second part of the proof of our upper bound on the maximum degree is inductive: we prove that with high probability for any vertex $s \in (t_i, t_{i + 1}]$ it holds that $\deg_t(s) \le \max_{\tau \le t_i} \{\deg_t(\tau)\}$, that is, the \emph{later vertices} (that is, for any $s > t_0$) can have maximum degree only with a negligible probability.
This proof can also be decomposed into three steps: first, we show that a vertex $s$ on its arrival cannot have a degree greater than $(1 + \varepsilon) (p X_t + r)$ with high probability, and then it cannot increase between time $s$ and $t_{i + 1}$ to exceed $X_{t_{i + 1}}$. Finally, to proceed from $\deg_{t_i + 1}(s) \le X_{t_i + 1}$ whp to $\deg_t(s) \le X_t$ whp we use exactly the same Chernoff bound as for early vertices.

To prove the lower bound we follow the steps from the upper bound for the early vertices: we show a respective lower Chernoff-type bound on the growth of $\deg_\tau(s)$ over an interval of certain length $h$ and we combine it with different (but very similar) sequences $t_i$ and $X_{t_i}$, thus proving that in this case $\deg_\tau(s) \ge X_\tau - \ln^{1 + p}(\tau) + 1$ with high probability for all early vertices (that is, $s \le t_0$), and that $X_t \ge (1 - \alpha) t^p$ for any $\alpha > 0$.

Note that the asymmetry between the proofs of both bounds stems from the fact that for the lower bound we only needed to find an inequality that holds with high probability for a single vertex, whereas for the upper bound we had to prove an inequality that holds with high probability for all vertices $s = 1, \ldots, t$.

\subsection[Upper bound, early vertices]{Upper bound, early vertices ($s \le t_0)$}
\label{sec:maximum-upper-early}

We begin with the definitions for two auxiliary sequences that we mentioned earlier:
\begin{definition}
  \label{def:sequences}
  For any $t$ and the given coefficients $\phi(t)$, $(\beta_i(t))_{i = 0}^{k - 1}$ and the sequence of positive jumps $(w_i(t))_{i = 0}^{k - 1}$ we define the sequences $(t_i)_{i = 0}^k$ and $(X_{t_i})_{i = 0}^k$ and a number $k(t) \in \mathbb{N}$, also implicitly dependent on $t$ as follows:
  \begin{align*}
      t_0 & = \phi(t),\qquad t_{i + 1} = t_i + w_i(t), \\
      X_{t_0} & = t_0,\qquad X_{t_{i + 1}} = X_{t_i} + \beta_i(t) \frac{w_i(t) X_{t_i}}{t_i}, \\
      & \text{$k$ is such that $t_{k - 1} < t \le t_k$.}
  \end{align*}
\end{definition}
Moreover, to prove the desired bounds it would be ultimately necessary that $\phi(t)$ and all $w_i(t)$ tend to infinity with $t$.
For brevity, from now on we assume the dependency on $t$ as implicit and write $\phi$, $\beta_i$, and $w_i$ instead of $\phi(t), \beta_i(t), w_i(t)$, respectively.

Note that inductively from the definition it follows that if $\beta_i \le 1$, then $X_{t_i} \le t_i$ for all $i = 0, 1, \ldots, k$.

Moreover, observe that we do not need to specify the values of $X_\tau$ for $\tau$ other than $\{t_0, t_1, \ldots, t_k\}$.
In the rest of the paper we will be using precisely these values in the proofs, so such a definition is sufficient for our purposes.
For reader's convenience we shall assume that for any $\tau \in (t_l, t_{l + 1})$ for some $l = 0, 1, \ldots, k - 1$ the sequence is completed in any way such that $X_{t_l} \le X_\tau \le X_{t_{l + 1}}$.

Now we analyze the asymptotic properties of these sequences.
We start with a simple lower bound:
\begin{lemma}
  \label{lem:lower-2}
  Assume $\beta_i \ge p - \frac{p (1 - p)}{4 \ln{t_i}}$ and $w_i \le \frac{t_i}{\ln{t_i}}$.
  For $t \to \infty$ we have $X_{t_i} \ge t_i^p$ for all $i = 0, 1, \ldots, k$.
\end{lemma}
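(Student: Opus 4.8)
The plan is to avoid a step-by-step induction and instead unfold the recursion of \Cref{def:sequences} into a single telescoping product. Writing $x_j = w_j/t_j$ for the relative jumps, the definition gives $X_{t_i} = X_{t_0}\prod_{j=0}^{i-1}(1+\beta_j x_j)$ and $t_i^p = t_0^p\prod_{j=0}^{i-1}(1+x_j)^p$, so that
\[
  \frac{X_{t_i}}{t_i^p} = \frac{X_{t_0}}{t_0^p}\prod_{j=0}^{i-1}\frac{1+\beta_j x_j}{(1+x_j)^p} = t_0^{1-p}\prod_{j=0}^{i-1}\frac{1+\beta_j x_j}{(1+x_j)^p},
\]
using $X_{t_0}=t_0$. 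Thus it suffices to show that the product never drops below $t_0^{-(1-p)}$; the factor $t_0^{1-p}=\phi^{1-p}$ is a large initial slack that I will spend to absorb the cumulative error accrued over all steps.

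Next I would bound each factor from below. By the concavity (Bernoulli) inequality $(1+x_j)^p \le 1+p x_j$ for $0\le p\le 1$, and then using the hypothesis $p-\beta_j \le \frac{p(1-p)}{4\ln t_j}$ together with $1+p x_j \ge 1$, every factor satisfies
\[
  \frac{1+\beta_j x_j}{(1+x_j)^p} \ge \frac{1+\beta_j x_j}{1+p x_j} = 1 - \frac{(p-\beta_j)x_j}{1+p x_j} \ge 1 - \frac{p(1-p)}{4\ln t_j}\,x_j,
\]
where the last bound also holds trivially when $\beta_j\ge p$. Taking the product and using $\prod_j(1-a_j)\ge \exp(-2\sum_j a_j)$ for the (vanishingly small) $a_j = \frac{p(1-p)}{4\ln t_j}x_j$, the whole task reduces to estimating $\sum_{j}\frac{x_j}{\ln t_j} = \sum_j \frac{w_j}{t_j\ln t_j}$.

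The key estimate is that this sum is only doubly logarithmic. Since $w_j\le t_j/\ln t_j$ forces $x_j\le 1/\ln t_j\to 0$ uniformly in $j$, we have $\frac{w_j}{t_j}=\ln\!\big(1+x_j\big)(1+o(1)) = (\ln t_{j+1}-\ln t_j)(1+o(1))$, and a Riemann-sum comparison with $\int \frac{\d{s}}{s\ln s}$ gives $\sum_j \frac{w_j}{t_j\ln t_j}\le (1+o(1))\big(\ln\ln t_k-\ln\ln t_0\big)=O(\ln\ln t)$. Hence $\prod_{j=0}^{i-1}\frac{1+\beta_j x_j}{(1+x_j)^p}\ge \exp\!\big(-\tfrac{p(1-p)}{2}\,O(\ln\ln t)\big) = (\ln t)^{-O(1)}$, and therefore $\frac{X_{t_i}}{t_i^p}\ge t_0^{1-p}(\ln t)^{-O(1)}$. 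Because $t_0=\phi(t)\to\infty$ — in particular $\phi$ is at least polylogarithmic, so that $\phi^{1-p}$ dominates the $(\ln t)^{O(1)}$ loss — the right-hand side is at least $1$ for all $i$ once $t$ is large, which gives $X_{t_i}\ge t_i^p$ as claimed.

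The step I expect to be the real obstacle is the one that forces this global argument rather than a local induction. A naive induction would try to establish the single-step inequality $1+\beta_j x_j\ge (1+x_j)^p$, but this is \emph{false} for small jumps: the gap in Bernoulli's inequality, $1+p x_j-(1+x_j)^p$, is of order $x_j^2$, whereas the deficit coming from $\beta_j$ can be as large as $\frac{p(1-p)}{4\ln t_j}x_j$, which is of order $x_j$ and hence dominates once $x_j\ll 1/\ln t_j$. Consequently individual increments of $X_{t_i}/t_i^p$ may fall below $1$, and the proof must instead show that the accumulated multiplicative loss over all steps is only $\polylog(t)$ and is paid for by the initial slack $t_0^{1-p}$. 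Making the $O(\ln\ln t)$ bound on $\sum_j \frac{w_j}{t_j\ln t_j}$ fully rigorous — controlling the $(1+o(1))$ Riemann-sum error uniformly in $j$ — is the one genuinely technical point.
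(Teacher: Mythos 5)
Your proposal is correct in substance and takes a genuinely different route from the paper --- and your closing remark about why a naive induction fails is not hypothetical: it describes exactly what the paper's own proof does. The paper proves \Cref{lem:lower-2} by induction on $X_{t_i}\ge Y_{t_i}:=t_i^p$, through the chain
\begin{align*}
  Y_{t_{i+1}}-Y_{t_i}
  \;\le\; t_i^p\left(\frac{p w_i}{t_i}-\frac{p(1-p)w_i^2}{4t_i^2}\right)
  \;\le\; t_i^p\,\frac{w_i}{t_i}\left(p-\frac{p(1-p)}{4\ln t_i}\right)
  \;\le\; X_{t_i}\,\frac{\beta_i w_i}{t_i},
\end{align*}
and the middle inequality is equivalent to $w_i\ge t_i/\ln t_i$ --- the \emph{reverse} of the stated hypothesis $w_i\le t_i/\ln t_i$. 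As you observe, this cannot be repaired locally: when $x_i=w_i/t_i\ll 1/\ln t_i$ the quadratic Bernoulli gap cannot absorb the linear deficit $\frac{p(1-p)}{4\ln t_i}x_i$, so the bare statement $X_{t_i}\ge t_i^p$ does not propagate a single step. This small-jump regime is the one that actually occurs where the lemma is used: in \Cref{thm:upper-bound-early} one has $w_i=\Theta(t_i\ln t\,\ln^2 t_i/X_{t_i})$, and since $X_{t_i}\ge\max\{\ln^{1+p}t,\,t_i^p\}\ge \ln t\cdot t_i^{p^2/(1+p)}$ this is $o(t_i/\ln t_i)$ throughout. Your telescoping product, which carries the initial slack $\phi^{1-p}$ forward and charges against it the accumulated loss $\exp(-O(\sum_j x_j/\ln t_j))=(\ln t)^{-O(1)}$, is a valid repair; the Riemann-sum estimate $\sum_j w_j/(t_j\ln t_j)\le(1+o(1))(\ln\ln t_k-\ln\ln t_0)$ is indeed uniform, because $x_j\le 1/\ln t_j\le 1/\ln\phi=o(1)$ for every $j$.

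One caveat, which you only half-acknowledge: your last step needs a quantitative lower bound on $\phi$, not merely $\phi\to\infty$, and not merely ``$\phi$ is polylogarithmic''. With your constants the guaranteed bound on the loss is $(\ln t)^{-(1+o(1))\,p(1-p)/2}$, so you need $\phi^{1-p}$ to exceed that, i.e. $\phi\ge(\ln t)^{p/2+\varepsilon}$ for some $\varepsilon>0$. This is not among the hypotheses of \Cref{lem:lower-2}, and in fact the lemma as stated is false without some such assumption: taking $\phi=\ln\ln t$, $\beta_j=p-\frac{p(1-p)}{4\ln t_j}$ and $w_j=\sqrt{t_j}$ satisfies every stated hypothesis, yet $X_{t_k}/t_k^p\approx\phi^{1-p}(\ln t)^{-(1+o(1))\,p(1-p)/4}\to 0$. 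So your approach not only replaces a broken induction with a correct global argument, it also exposes that the statement needs an extra hypothesis on $\phi$; that hypothesis holds wherever the paper invokes the lemma, since there $\phi=\ln^{1+p}t$ and $1-p^2>p(1-p)/2$. Add that hypothesis explicitly and your proof is complete.
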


\begin{proof}
  Let us define $Y_\tau = \tau^p$.
  By definition we know that $X_{t_0} = t_0 \ge Y_{t_0}$.
  Now, let us assume that $X_{t_i} \ge Y_{t_i}$ holds for some $i \ge 0$. Then we have
  \begin{align*}
      Y_{t_{i + 1}} - Y_{t_i} & = \left((t_i + w_i)^p - t_i^p\right)
        = t_i^p \left(\left(1 + \frac{w_i}{t_i}\right)^p - 1\right) \\
      & \le t_i^p \left(\frac{p w_i}{t_i} - \frac{p (1 - p) w_i^2}{4 t_i^2}\right)
        \le t_i^p\,\frac{w_i}{t_i} \left(p - \frac{p (1 - p)}{4 \ln{t_i}}\right).
  \end{align*}
  since from Taylor expansion it follows that $(1 + x)^p \le 1 + p x - \frac{p (1 - p) x^2}{4}$ for any $p \in [0, 1]$ and any $x \in (0, 1)$.
  Therefore,
  \begin{align*}
      Y_{t_{i + 1}} - Y_{t_i}
      & \le Y_{t_i} \frac{w_i}{t_i} \left(p - \frac{p (1 - p)}{4 \ln{t_i}}\right)
        \le X_{t_i} \frac{\beta_i w_i}{t_i}
        = X_{t_{i + 1}} - X_{t_i},
  \end{align*}
  so clearly $X_{t_{i + 1}} \ge Y_{t_{i + 1}}$ holds as well, which completes the inductive step.
\end{proof}

Now we prove an upper bound on $X_t$.

\begin{lemma}
  \label{lem:upper}
  Assume that $\phi \ge \ln{t}$, $\beta_i \le p + \frac{1}{2 \ln{t_i}}$ and $w_i \le \frac{t_i}{\ln{t_i}}$.
  It holds asymptotically as $t \to \infty$ that $X_{t_i} \le \phi^{1 - p} t_i^p \ln{t_i}$ for all $i = 0, 1, \ldots, k$.
\end{lemma}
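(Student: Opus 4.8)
The plan is to run the same kind of deterministic induction as in \Cref{lem:lower-2}, but now pushing the inequality in the opposite direction. I would set $Y_\tau = \phi^{1-p}\tau^p\ln\tau$ and prove by induction on $i$ that $X_{t_i}\le Y_{t_i}$, which is precisely the claimed bound. The base case is immediate: $X_{t_0}=t_0=\phi$ while $Y_{t_0}=\phi^{1-p}\phi^p\ln\phi=\phi\ln\phi$, so $X_{t_0}\le Y_{t_0}$ holds as soon as $\ln\phi\ge 1$, which is guaranteed for large $t$ by the hypothesis $\phi\ge\ln t\to\infty$.

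For the inductive step, suppose $X_{t_i}\le Y_{t_i}$. Since $X_{t_{i+1}}=X_{t_i}\bigl(1+\beta_i w_i/t_i\bigr)\le Y_{t_i}\bigl(1+\beta_i w_i/t_i\bigr)$ (the factor being positive), it suffices to show $Y_{t_{i+1}}-Y_{t_i}\ge \beta_i\frac{w_i}{t_i}Y_{t_i}$. Cancelling the common factor $\phi^{1-p}$ and writing $g(\tau)=\tau^p\ln\tau$, this reduces to $g(t_{i+1})-g(t_i)\ge\beta_i\frac{w_i}{t_i}g(t_i)$. Now $g'(\tau)=\tau^{p-1}(p\ln\tau+1)$, so the first-order increment equals $g'(t_i)w_i=p\frac{w_i}{t_i}g(t_i)+t_i^{p-1}w_i$, whereas the hypothesis $\beta_i\le p+\frac{1}{2\ln t_i}$ bounds the target by $\beta_i\frac{w_i}{t_i}g(t_i)\le p\frac{w_i}{t_i}g(t_i)+\frac12 t_i^{p-1}w_i$. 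Thus the linear part already carries a positive slack of $\frac12 t_i^{p-1}w_i$; this is exactly the point where the constant $1$ produced by differentiating the logarithmic factor beats the $\frac12$ built into the hypothesis on $\beta_i$.

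The main obstacle is that $g(\tau)=\tau^p\ln\tau$ is (eventually) concave, so the quadratic Taylor remainder works \emph{against} us and must be dominated by the slack just identified. Writing $g(t_{i+1})-g(t_i)=g'(t_i)w_i+\frac12 g''(\xi)w_i^2$ for some $\xi\in(t_i,t_{i+1})$, with $g''(\tau)=\tau^{p-2}(p(p-1)\ln\tau+2p-1)$, I would bound $|g''(\xi)|\le t_i^{p-2}\bigl(p(1-p)(\ln t_i+\ln 2)+1\bigr)$ using $\xi^{p-2}\le t_i^{p-2}$ and $\ln\xi\le\ln t_{i+1}\le\ln t_i+\ln 2$ (the latter since $t_{i+1}\le 2t_i$; in any convex stretch $g''>0$ only helps). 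Invoking the step-size hypothesis $w_i\le t_i/\ln t_i$ to replace one factor $w_i$, the curvature term is at most $\frac12 t_i^{p-1}w_i\cdot\frac{p(1-p)(\ln t_i+\ln 2)+1}{\ln t_i}$, hence is dominated by the slack $\frac12 t_i^{p-1}w_i$ exactly when $(1-p(1-p))\ln t_i\ge p(1-p)\ln 2+1$. Since $p(1-p)\le\frac14$ and every $t_i\ge t_0=\phi\ge\ln t$, this holds uniformly in $i$ once $t$ is large, so the induction closes. The crux is therefore the interplay of the two hypotheses: $\beta_i\le p+\frac{1}{2\ln t_i}$ supplies the first-order slack, and $w_i\le t_i/\ln t_i$ keeps the concavity correction below it.
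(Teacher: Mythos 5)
Your proof is correct and takes essentially the same route as the paper's: the same induction against the comparison sequence $Y_\tau=\phi^{1-p}\tau^p\ln\tau$, with the first-order slack $\frac{1}{2}t_i^{p-1}w_i$ extracted from the $\frac{1}{2\ln t_i}$ term in the hypothesis on $\beta_i$, and the second-order (concavity) error beaten down using $w_i\le t_i/\ln t_i$. The only difference is presentational -- you organize the expansion as Taylor's theorem with Lagrange remainder for $g(\tau)=\tau^p\ln\tau$, whereas the paper multiplies out explicit inequalities for $(1+x)^p$ and $\ln(1+x)$ -- but the underlying estimates coincide.
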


\begin{proof}
  We again proceed by induction with $Y_\tau = \phi^{1 - p} \tau^p \ln{\tau}$.
  Clearly, $X_{t_0} = t_0 \le Y_{t_0} = t_0 \ln{t_0}$.
  Directly from the definition we get
  \begin{align*}
      Y_{t_{i + 1}} - X_{t_{i + 1}}
        & = Y_{t_{i + 1}} - X_{t_i} \left(1 + \frac{\beta_i w_i}{t_i}\right) \\
        & \ge \phi^{1 - p} t_{i + 1}^p \ln{t_{i + 1}} - \phi^{1 - p} t_i^p \ln{t_i} \left(1 + \frac{\beta_i w_i}{t_i}\right) \\
        & \ge \phi^{1 - p} t_i^p \ln{t_i} \left(\left(\frac{t_{i + 1}}{t_i}\right)^p \left(\frac{\ln{t_{i + 1}}}{\ln{t_i}}\right) - 1 - \frac{\beta_i w_i}{t_i}\right) \\
        & = \phi^{1 - p} t_i^p \ln{t_i} \left(\left(1 + \frac{w_i}{t_i}\right)^p \left(1 + \frac{\ln(1 + w_i / t_i)}{\ln{t_i}}\right) - 1 - \frac{\beta_i w_i}{t_i}\right).
  \end{align*}
  
  Now we use the inequalities derived from the respective Taylor expansions: $(1 + x)^p \ge 1 + p x - \frac{p (1 - p) x^2}{2} \ge 1$ and $\ln(1 + x) \ge x - \frac{x^2}{2} \ge 0$, true for any $p \in [0, 1]$ and any $x \in (0, 1)$. In particular, in our case $x = \frac{w_i}{t_i} \le \frac{1}{\ln{t_i}} \le \frac{1}{\ln{\ln{t}}} = o(1)$. Therefore
  \begin{align*}
      Y_{t_{i + 1}} - X_{t_{i + 1}} & \ge \phi^{1 - p} t_i^p \ln{t_i} \Bigg(\frac{(p - \beta_i) w_i}{t_i} + \left(1 + \frac{p w_i}{t_i}\right) \left(\frac{w_i}{t_i \ln{t_i}} - \frac{w_i^2}{2 t_i^2 \ln{t_i}}\right) \\
        & \qquad - \frac{p (1 - p) w_i^2}{2 t_i^2} \left(1 + \frac{w_i}{t_i \ln{t_i}} - \frac{w_i^2}{2 t_i^2 \ln{t_i}}\right)\Bigg) \\
          & \ge \phi^{1 - p} t_i^{p - 1} \ln{t_i} \cdot w_i \Bigg(- \frac{1}{2 \ln{t_i}} + \frac{1}{\ln{t_i}} - \frac{w_i}{8 t_i} \left(1 + \frac{w_i}{t_i \ln{t_i}} - \frac{w_i^2}{2 t_i^2 \ln{t_i}}\right)\Bigg) \\
          & \ge \phi^{1 - p} t_i^{p - 1} \cdot w_i \left(\frac{3}{8} - \frac{1}{8 \ln{t_i}}\left(\frac{w_i}{t_i} - \frac{w_i^2}{2 t_i^2}\right)\right),
  \end{align*}
  and for sufficiently large $t$ the last expression is clearly non-negative since $\frac{w_i}{t_i} \le \frac{1}{\ln{t_i}} \le \frac{1}{\ln{t_0}} \le \frac{1}{\ln{\phi}} \le \frac{1}{\ln{\ln{t}}} \to 0$, which completes the~proof.
\end{proof}

Next, we need some bounds on $\deg_{\tau}(s)$ holding with high probability to match with the sequence $X_\tau$.
Let us begin with the following estimate:
\begin{lemma}
  \label{lem:simple-growth}
    For any $\phi \le \tau \le t$ and any $0 \le d \le h$ it is true that
    \begin{align*}
        \Pr\!\left[\deg_{\tau + h}(s) - \deg_\tau(s) \ge d~\vert \deg_\tau(s)\right]
            \le \exp\!\left(d \ln{\frac{\exp(1) \cdot h (p \deg_{\tau}(s) + p d + r)}{d \tau}}\right).
    \end{align*}
\end{lemma}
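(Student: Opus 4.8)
The plan is to write the degree increment over the interval as a sum of indicator variables and then dominate it stochastically by a binomial random variable, to which a standard Chernoff tail bound applies. Concretely, for $m = 1, \ldots, h$ let $I_m$ be the indicator that the vertex arriving as $\tau + m$ (into the graph $G_{\tau + m - 1}$ on $\tau + m - 1$ vertices) becomes adjacent to $s$, so that $\deg_{\tau+h}(s) - \deg_\tau(s) = \sum_{m=1}^h I_m$. Conditioning on the graph just before this vertex arrives, its parent is uniform among the $\tau + m - 1$ current vertices, so $s$ lies in the neighborhood of the parent with probability $\deg_{\tau+m-1}(s)/(\tau+m-1)$; combining the two divergence rules and using $\tau + m - 1 \ge \tau$ I would obtain
\begin{align*}
  \Pr[I_m = 1 \mid G_{\tau+m-1}] = \frac{\deg_{\tau+m-1}(s)}{\tau+m-1}\, p + \left(1 - \frac{\deg_{\tau+m-1}(s)}{\tau+m-1}\right)\frac{r}{\tau+m-1} \le \frac{p \deg_{\tau+m-1}(s) + r}{\tau}.
\end{align*}

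The key idea is to control the random degree $\deg_{\tau+m-1}(s)$ appearing here through a stopping-time coupling, which is the main conceptual obstacle: without it the per-step connection probabilities are genuinely history-dependent and no clean binomial domination is available. Let $T$ be the first step at which the accumulated increment reaches $d$. For every step $m \le T$ the increment so far is at most $d-1$, hence $\deg_{\tau+m-1}(s) \le \deg_\tau(s) + d$ and the conditional connection probability is bounded by the \emph{deterministic} quantity
\begin{align*}
  q := \frac{p(\deg_\tau(s) + d) + r}{\tau} = \frac{p \deg_\tau(s) + p d + r}{\tau}.
\end{align*}
I would then introduce i.i.d.\ Bernoulli($q$) variables $B_1, \ldots, B_h$ coupled so that $I_m \le B_m$ for all $m \le T$. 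Since $\{\deg_{\tau+h}(s) - \deg_\tau(s) \ge d\} = \{T \le h\}$, and on this event $\sum_{m \le T} B_m \ge \sum_{m \le T} I_m = d$ while $T \le h$, the event is contained in $\{\sum_{m=1}^h B_m \ge d\}$. This yields the domination $\Pr[\deg_{\tau+h}(s) - \deg_\tau(s) \ge d \mid \deg_\tau(s)] \le \Pr[\mathrm{Bin}(h,q) \ge d]$, and it is precisely this truncation at the threshold $d$ that produces the somewhat unusual $pd$ term in the statement.

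Finally I would close with the standard Chernoff upper-tail estimate $\Pr[\mathrm{Bin}(h,q) \ge d] \le (e\, h q / d)^d$, obtained from the exponential moment method by taking $\lambda = \ln(d/(hq))$ and discarding the factor $e^{-hq} \le 1$; its exponent is $d \ln(e\, h q / d)$. Substituting $hq = h(p\deg_\tau(s) + pd + r)/\tau$ gives exactly the claimed bound $\exp\!\big(d \ln( \exp(1)\cdot h(p\deg_\tau(s)+pd+r)/(d\tau))\big)$. I would remark for completeness that when $d \le hq$ the right-hand side already exceeds $1$ and the inequality is trivial, and when $d > h$ the event is empty, so that the substantive regime is $hq < d \le h$.
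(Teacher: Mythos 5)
Your proof is correct, and it rests on the same key insight as the paper's---before the $d$-th new edge arrives, the degree of $s$ has grown by less than $d$, so every per-step connection probability is at most $q = (p\deg_\tau(s) + pd + r)/\tau$---but you package that insight quite differently. The paper proceeds by direct enumeration: it writes the event as a disjoint union over the positions $i_1 < \cdots < i_d$ of the first $d$ successes, bounds the probability of the $j$-th success by $\bigl(p(\deg_\tau(s)+j-1)+r\bigr)/(\tau+i_j)$ (the degree at that moment being determined by the number of prior successes), bounds the failure probabilities by $1$, and arrives at $\binom{h}{d}\cdot q^d \le (e h q/d)^d$. You instead stop the process at the first time $T$ at which the increment reaches $d$, dominate each indicator $I_m$, $m \le T$, by an i.i.d.\ Bernoulli$(q)$ variable via a coupling, and invoke the standard binomial tail bound $\Pr[\mathrm{Bin}(h,q)\ge d] \le (e h q/d)^d$; the two routes give exactly the same final estimate. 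Your version is more modular, since the handling of the history-dependence (domination up to a stopping time) is cleanly separated from the tail bound, at the cost of needing to justify that the coupled variables $B_m$ are genuinely i.i.d.\ Bernoulli$(q)$---your construction does work here because $q$ is deterministic once one conditions on $\deg_\tau(s)$, as the statement does. The paper's enumeration avoids any coupling and is self-contained, and its intermediate per-success bounds with $j-1$ in place of $d$ are slightly sharper before being relaxed to the uniform bound $q$ in the last step.
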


\begin{proof}
    First, it follows from the definition of the model that $\deg_{\tau + i + 1}(s) = \deg_{\tau + i}(s) + I_{\tau + i}$ for $i = 0, 1, \ldots, h - 1$ where $I_{\tau + i} \sim Be(q_{\tau + i})$ for some $q_{\tau + i} \in [0, 1]$. The probability $q_{\tau + i}$ of adding an edge between $s$ and $\tau + i + 1$ is just a sum of probabilities of two events:
    \begin{enumerate}
        \item when $\parent{\tau + i + 1} \in N_{G_{\tau + i}}(s)$ holds, i.e. with probability $\frac{\deg_{\tau + i}(s)}{\tau + i}$ (since we draw the parent uniformly), we add an edge with probability $p$ -- so the whole event has probability $\frac{p \deg_{\tau + i}(s)}{\tau + i}$,
        \item when $\parent{\tau + i + 1} \notin N_{G_{\tau + i}}(s)$ holds, i.e. with probability $1 - \frac{\deg_{\tau + i}(s)}{\tau + i}$, we add an edge with probability $\frac{r}{\tau + i}$ -- so the whole event has probability $\frac{r}{\tau + i} \left(1 - \frac{\deg_{\tau + i}(s)}{\tau + i}\right)$.
    \end{enumerate}
    Both events are disjoint, so we obtain $q_{\tau + i} = \frac{p \deg_{\tau + i}(s) + r}{\tau + i} - \frac{r \deg_{\tau + i}(s)}{(\tau + i)^2} \le \frac{p \deg_{\tau + i}(s) + r}{\tau + i}$.

    Next, we note that the degree grows by at least $d$ if there is a subsequence of $d$ successes $i_1, i_2, \ldots, i_d$ with only failures between them:
    \begin{align*}
        \Pr&\left[\deg_{\tau + h}(s) - \deg_{\tau}(s) \ge d~\vert \deg_\tau(s)\right] \\
            & = \sum_{0 \le i_1 < \ldots < i_d < h} \Pr\!\left[\bigcup_{j \in \{i_1, \ldots, i_d\}} I_{\tau + j} \cup \bigcup_{j \in [0, i_d] \setminus \{i_1, \ldots, i_d\}} \lnot I_{\tau + j}\right] \\
            & = \sum_{0 \le i_1 < \ldots < i_d < h} \prod_{j \in \{i_1, \ldots, i_d\}} \Pr[I_{\tau + j} | \deg_{\tau + j}(s)] \prod_{j \in [0, i_d] \setminus \{i_1, \ldots, i_d\}} \Pr[\lnot I_{\tau + j} | \deg_{\tau + j}(s)].
    \end{align*}
    Now observe that $\Pr[\lnot I_{\tau + j} | \deg_{\tau + j}(s)] \le 1$ for any $j$ and $\Pr[I_{\tau + i_j} | \deg_{\tau + i_j}(s)] \le \frac{p (\deg_\tau(s) + j - 1) + r}{\tau + i_j}$ for $j = 1, 2, \ldots, d$ since $j$-th success occurs after exactly $j - 1$ successes, i.e. when the degree of the vertex $s$ is exactly equal to $\deg_\tau(s) + j - 1$. Thus
    \begin{align*}
        \Pr\!\left[\deg_{\tau + h}(s) - \deg_{\tau}(s) \ge d~\vert \deg_\tau(s)\right]
            & \le \sum_{0 \le i_1 < \ldots < i_d < h} \prod_{j = 1}^{d} \frac{p (\deg_\tau(s) + j - 1) + r}{\tau + i_j} \\
              \le \binom{h}{d} & \max_{0 \le i_1 < \ldots < i_d < h} \left\{\prod_{j = 1}^{d} \frac{p (\deg_\tau(s) + j - 1) + r}{\tau + i_j}\right\}.
    \end{align*}
    One can easily spot that the maximum occurs in the case when $i_j = j - 1$ for all $j = 1, 2, \ldots, d$. This, coupled with a simple upper bound on the value of the binomial coefficient, leads us to the final result
    \begin{align*}
        \Pr\!\left[\deg_{\tau + h}(s) - \deg_{\tau}(s) \ge d~\vert \deg_\tau(s)\right]
            & \le \frac{h^d \exp(d)}{d^d} \prod_{j = 0}^{d - 1} \frac{p (\deg_{\tau}(s) + j) + r}{\tau + j} \\
            \le \exp&\left(d \ln{h} - d \ln{d} + d + d \ln{\frac{p (\deg_{\tau}(s) + d) + r}{\tau}}\right) \\
            \le \exp&\left(d \ln{\frac{\exp(1) \cdot h (p \deg_{\tau}(s) + p d + r)}{d \tau}}\right).
    \end{align*}
\end{proof}
This lemma gives a far better bound than the simple estimation $\deg_{\tau + h}(s) \le \deg_\tau(s) + h$ (e.g. used in \cite{frieze-cocoon}).
However, it is still too coarse to obtain a desired upper bound that could be coupled with the sequence $X_\tau$. But we can still use it to kickstart the Chernoff bound by bounding the probabilities of all Bernoulli variables:
\begin{lemma}
  \label{col:simple-tail}
    For $\ln^{1 + p}{t} \le \tau \le t$, $\varepsilon = \frac{1}{5 \ln{\tau}}$ with $h \le \frac{\varepsilon \tau}{p (1 + 2 \varepsilon) \exp(2)}$ it holds for any constant $A > 0$ that
    \begin{align*}
        \Pr\!\left[\max_{j = 0, \ldots, h - 1} \left\{\frac{p \deg_{\tau + j}(s) + r}{\tau + j}\right\} \ge (1 + \varepsilon) \frac{p X_\tau + r}{\tau}~\Bigg\vert \deg_\tau(s) \le X_\tau\right] = O(t^{-A}).
    \end{align*}
\end{lemma}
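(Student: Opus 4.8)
The plan is to strip the running maximum down to a single event about the total degree growth on $[\tau,\tau+h)$ and then invoke \Cref{lem:simple-growth}. Since $\tau+j\ge\tau$ and $j\mapsto\deg_{\tau+j}(s)$ is non-decreasing, every term obeys $\frac{p\deg_{\tau+j}(s)+r}{\tau+j}\le\frac{p\deg_{\tau+h-1}(s)+r}{\tau}$, so the event inside the probability is contained in $\{\deg_{\tau+h-1}(s)\ge(1+\varepsilon)X_\tau+\frac{\varepsilon r}{p}\}$. Conditioning on $\deg_\tau(s)\le X_\tau$, this forces the increment $\deg_{\tau+h-1}(s)-\deg_\tau(s)$ to be at least $d:=\frac{\varepsilon(pX_\tau+r)}{p}$; if $d>h-1$ the event is outright impossible (the degree rises by at most one per step), so we may assume $0\le d\le h-1$.

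Next I would apply \Cref{lem:simple-growth} on the $h-1$ steps starting at $\tau$ (bounding $h-1\le h$ inside the estimate). Its bound is increasing in the conditioning value $\deg_\tau(s)$, so averaging over $\deg_\tau(s)\le X_\tau$ lets me replace $\deg_\tau(s)$ by $X_\tau$. The point of the particular choice of $d$ is that it makes the logarithm collapse, since $pX_\tau+pd+r=(1+\varepsilon)(pX_\tau+r)$ and $\frac{pX_\tau+r}{d}=\frac{p}{\varepsilon}$, giving
\begin{align*}
\Pr\!\left[\deg_{\tau+h-1}(s)-\deg_\tau(s)\ge d\mid\deg_\tau(s)\le X_\tau\right]\le\exp\!\left(d\ln\frac{\exp(1)\,h\,p\,(1+\varepsilon)}{\varepsilon\tau}\right).
\end{align*}
Feeding in the hypothesis $h\le\frac{\varepsilon\tau}{p(1+2\varepsilon)\exp(2)}$ bounds the argument of the logarithm by $\frac{1+\varepsilon}{(1+2\varepsilon)\exp(1)}\le\exp(-1)$, and hence the whole probability by $\exp(-d)$.

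The main obstacle is the final step: converting $\exp(-d)$ into $O(t^{-A})$, i.e. showing $d\ge A\ln t$ eventually. Since $d\ge\varepsilon X_\tau=\frac{X_\tau}{5\ln\tau}$, everything rests on a good lower bound for $X_\tau$, and here the clean bound $X_\tau\ge\tau^p$ of \Cref{lem:lower-2} is \emph{not} enough: at the left endpoint $\tau=\ln^{1+p}t$ it gives $d$ of order $\ln^{p+p^2}t/\ln\ln t$, which is $o(\ln t)$ as soon as $p+p^2<1$, i.e. for $p<\frac{\sqrt{5}-1}{2}$. To cover every $0<p<1$ one should instead use the sharper estimate $X_\tau\ge\phi^{1-p}\tau^p$, which follows from exactly the induction of \Cref{lem:lower-2} carried out against $Y_\tau=\phi^{1-p}\tau^p$ (the base case $X_{t_0}=t_0=\phi$ now holds with equality, and the inductive step is the same computation scaled by $\phi^{1-p}$). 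With $\phi\ge\ln t$ this yields $d\ge\frac{\ln^{1-p}t\cdot\tau^p}{5\ln\tau}$; as $\tau\mapsto\tau^p/\ln\tau$ is increasing on the range, its minimum sits at $\tau=\ln^{1+p}t$ and equals $\frac{\ln^{1+p^2}t}{5(1+p)\ln\ln t}$, which exceeds $A\ln t$ once $t$ is large because $1+p^2>1$ while the denominator is only of order $\ln\ln t$. Verifying this lower bound on $d$ \emph{uniformly} over $\ln^{1+p}t\le\tau\le t$, with the worst case at the left endpoint, is the quantitative crux of the argument — and it is precisely what the exponent $\ln^{1+p}t$ in the hypothesis is calibrated to guarantee.
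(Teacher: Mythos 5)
Your proof is correct, and it reaches the same destination as the paper's --- a bound of the form $\exp(-\Theta(\varepsilon X_\tau))$ extracted from \Cref{lem:simple-growth} and converted into $O(t^{-A})$ by a lower bound on $X_\tau$ that mixes $\phi$ with $\tau^p$ --- but three of your execution choices genuinely differ, and each is arguably cleaner. First, the paper handles the running maximum by bounding, for each index $j$ separately, the event $\frac{p\deg_{\tau+j}(s)+r}{\tau+j}\ge(1+\varepsilon)\frac{pX_\tau+r}{\tau}$ (each such event forces $\deg_{\tau+j}(s)-\deg_\tau(s)\ge\varepsilon X_\tau$), proving each has probability at most $t^{-A-1}$, and then taking a union bound over the $h=O(t)$ indices; your monotone-domination step (every term is at most $\frac{p\deg_{\tau+h-1}(s)+r}{\tau}$) collapses the maximum to a single event and dispenses with the union bound entirely. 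Second, the paper takes $d=\varepsilon X_\tau$ and then needs the asymptotic inequality $r\le p\varepsilon X_\tau$ to absorb $r$ into a factor $(1+2\varepsilon)$; your choice $d=\varepsilon(pX_\tau+r)/p$ makes $r$ cancel identically, so your hypothesis on $h$ closes with $\frac{1+\varepsilon}{(1+2\varepsilon)e}\le e^{-1}$ and no asymptotic step. Third --- the quantitative crux --- your diagnosis that $X_\tau\ge\tau^p$ alone fails near $\tau=\ln^{1+p}t$ when $p+p^2<1$ is exactly right, and it is precisely why the paper does not rely on \Cref{lem:lower-2} alone: it invokes $X_\tau\ge\max\{\phi,\tau^p\}$ (the $\phi$ part is free, since the sequence is non-decreasing with $X_{t_0}=\phi$) together with the interpolation $\max\{a,b\}\ge a^\gamma b^{1-\gamma}$ at $\gamma=\frac{1}{1+p}$, giving the exponent $\ln t\cdot\tau^{p^2/(1+p)}/(5\ln\tau)$. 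Your alternative --- rerunning the induction of \Cref{lem:lower-2} against $Y_\tau=\phi^{1-p}\tau^p$, which indeed goes through since the base case becomes an equality and the inductive step scales linearly in the constant $\phi^{1-p}$ --- yields the slightly stronger $X_\tau\ge\phi^{1-p}\tau^p$ and covers all $0<p<1$ equally well; the paper's route has the small advantage of using only facts already on record, while yours requires a (trivial) strengthening of a lemma. Both close the argument, so there is no gap.
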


\begin{proof}
    Substituting $d = \varepsilon X_\tau$ in \Cref{lem:simple-growth} we get asymptotically as $t \to \infty$ that
    \begin{align*}
        \Pr\Bigg[&\frac{p \deg_{\tau + h}(s) + r}{\tau + h} \ge (1 + \varepsilon) \frac{p X_\tau + r}{\tau}~\Bigg\vert \deg_\tau(s) \le X_\tau\Bigg] \\
            & \le \Pr\!\left[\deg_{\tau + h}(s) \ge (1 + \varepsilon) X_\tau~\vert \deg_\tau(s) \le X_\tau\right] \\
            & \le \Pr\!\left[\deg_{\tau + h}(s) - \deg_{\tau}(s) \ge \varepsilon X_\tau~\vert \deg_\tau(s) \le X_\tau\right] \\
            & \le \exp\!\left(\varepsilon X_\tau \ln{\frac{\exp(1) \cdot h (p X_\tau + p \varepsilon X_\tau + r))}{\varepsilon X_\tau \cdot \tau}}\right) \\
            & \le \exp\!\left(\varepsilon X_\tau \ln{\frac{\exp(1) \cdot h p (1 + 2 \varepsilon) X_{\tau})}{\varepsilon X_\tau \cdot \tau}}\right)
              \le \exp\!\left(- \varepsilon X_\tau\right) \\
            & \le \exp\!\left(- \frac{\max\{\ln^{1 + p}{t}, \tau^p\}}{5 \ln{\tau}}\right)
              \le \exp\!\left(- \frac{\ln{t} \cdot \tau^{p^2 / (1 + p)}\}}{5 \ln{\tau}}\right) \le t^{-A - 1},
    \end{align*}
    for any constant $A > 0$.
    In the fourth line we applied inequality $r \le p \varepsilon X_\tau$.
    Moreover, in the last line we used the facts that $X_\tau \ge \max\{\phi, \tau^p\}$ and $\max\{a, b\} \ge a^\gamma b^{1 - \gamma}$ for any $a, b > 0$ and $\gamma \in [0, 1]$.

    To complete the proof it is sufficient to use a union bound over all values up to $h = O(t)$.
\end{proof}

Let us now proceed with providing a Chernoff-type bound on the growth of the degree of a given early vertex:
\begin{lemma}
  \label{col:chernoff-upper}
  Let $1 \le s \le \tau \le t$ such that $\tau \ge \phi = \ln^{1 + p}{t}$. Then for any $A > 0$ it is true that
  \begin{align*}
       \Pr\!\left[\deg_{\tau + h}(s) - \deg_\tau(s) \ge \frac{3 A (1 + \delta)}{\delta^2} \ln{t}~\Bigg\vert \deg_\tau(s) \le X_\tau\right] = O(t^{-A}),
  \end{align*}
  with $\varepsilon = \delta = \frac{1}{5 \ln{\tau}}$, and $h = \frac{3 A \tau \ln{t}}{\delta^2 (1 + \varepsilon) (p X_\tau + r)}$.
\end{lemma}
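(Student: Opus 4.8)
The plan is to express the degree increment $\deg_{\tau + h}(s) - \deg_\tau(s)$ as a sum of Bernoulli indicators and to dominate that sum by a binomial random variable whose success probability is the uniform cap supplied by \Cref{col:simple-tail}, after which a single application of the standard multiplicative Chernoff bound finishes the argument. As recorded in the proof of \Cref{lem:simple-growth}, we may write $\deg_{\tau + j + 1}(s) = \deg_{\tau + j}(s) + I_{\tau + j}$ with $I_{\tau + j} \sim Be(q_{\tau + j})$ and $q_{\tau + j} \le \frac{p \deg_{\tau + j}(s) + r}{\tau + j}$, so that the increment is exactly $\sum_{j = 0}^{h - 1} I_{\tau + j}$. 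These indicators are \emph{not} independent, since each $q_{\tau + j}$ depends on the random trajectory of the degree; this dependence is the crux of the difficulty and is precisely what \Cref{col:simple-tail} is designed to neutralize.

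First I would introduce the uniform cap $\bar q = (1 + \varepsilon) \frac{p X_\tau + r}{\tau}$ together with the good event $\mathcal{E} = \{\max_{0 \le j < h} \frac{p \deg_{\tau + j}(s) + r}{\tau + j} \le \bar q\}$. Conditioned on $\deg_\tau(s) \le X_\tau$, \Cref{col:simple-tail} gives $\Pr[\mathcal{E}^c \mid \deg_\tau(s) \le X_\tau] = O(t^{-A})$. On $\mathcal{E}$ every conditional success probability satisfies $q_{\tau + j} \le \bar q$, so a routine sequential coupling lets us construct independent $Be(\bar q)$ variables $J_0, \ldots, J_{h - 1}$ with $I_{\tau + j} \le J_j$ throughout $\mathcal{E}$; hence on this event the increment is stochastically dominated by $B \sim \mathrm{Bin}(h, \bar q)$. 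The value of $h$ is chosen exactly so that $\mu := \E[B] = h \bar q = \frac{3 A \tau \ln{t}}{\delta^2 (1 + \varepsilon)(p X_\tau + r)} \cdot (1 + \varepsilon) \frac{p X_\tau + r}{\tau} = \frac{3 A \ln{t}}{\delta^2}$, and one checks that the threshold $\frac{3 A (1 + \delta)}{\delta^2} \ln{t}$ appearing in the statement equals $(1 + \delta) \mu$.

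It then remains to split, using the good event,
\begin{align*}
  \Pr\!\left[\deg_{\tau + h}(s) - \deg_\tau(s) \ge (1 + \delta)\mu \,\middle\vert\, \deg_\tau(s) \le X_\tau\right] \le \Pr[\mathcal{E}^c \mid \deg_\tau(s) \le X_\tau] + \Pr[B \ge (1 + \delta)\mu].
\end{align*}
The first summand is $O(t^{-A})$ by \Cref{col:simple-tail}, and the second is handled by the multiplicative Chernoff bound $\Pr[B \ge (1 + \delta)\mu] \le \exp\!\left(-\delta^2 \mu / 3\right) = \exp(-A \ln{t}) = t^{-A}$, which is legitimate because $\delta = \frac{1}{5 \ln{\tau}} \le 1$ for large $t$. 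Summing the two contributions yields the claimed $O(t^{-A})$.

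The main obstacle is exactly the dependence among the $I_{\tau + j}$: their success probabilities are functions of the unknown degree trajectory, so no naive Chernoff bound applies directly. The resolution is to work on $\mathcal{E}$, where \Cref{col:simple-tail} forces every $q_{\tau + j}$ below the single constant $\bar q$, and to perform the stochastic domination \emph{restricted to} $\mathcal{E}$ rather than unconditionally — it is this conditioning, paired with the off-event mass being absorbed into the $O(t^{-A})$ term, that makes replacing the dependent indicators by an honest binomial rigorous.
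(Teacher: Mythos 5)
Your proposal is correct and follows essentially the same route as the paper's own proof: the paper likewise splits on the bad event $\mathcal{D}_\varepsilon(\tau,h)$ (the complement of your $\mathcal{E}$), bounds its probability via \Cref{col:simple-tail}, stochastically dominates the increments on the good event by independent $Be\bigl((1+\varepsilon)\tfrac{pX_\tau+r}{\tau}\bigr)$ variables, and applies the same multiplicative Chernoff bound $\exp(-\delta^2\mu/3)$ with $\mu = 3A\ln t/\delta^2$. The one step you gloss over is verifying that the prescribed $h = \frac{3A\tau\ln t}{\delta^2(1+\varepsilon)(pX_\tau+r)}$ actually satisfies the size hypothesis of \Cref{col:simple-tail} (roughly $h \lesssim \varepsilon\tau/p$); the paper checks this explicitly via the lower bound $X_\tau \ge \max\{\ln^{1+p}t,\,\tau^p\} \ge \ln t \cdot \tau^{p^2/(1+p)}$, and your argument should include that computation to be complete.
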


\begin{proof}
    Let us first define an event
    \begin{align*}
        \mathcal{D}_\varepsilon(\tau, h) = \left[\max_{j = 0, \ldots, h - 1} \left\{\frac{p \deg_{\tau + j}(s) + r}{\tau + j}\right\} \ge (1 + \varepsilon) \frac{p X_\tau + r}{\tau}~\Bigg\vert \deg_\tau(s) \le X_\tau\right].
    \end{align*}
    Clearly,
    \begin{align*}
       \Pr&\left[\deg_{\tau + h}(s) - \deg_\tau(s) \ge d~\vert \deg_\tau(s) \le X_\tau\right] \\
       & \le \Pr\!\left[\deg_{\tau + h}(s) - \deg_\tau(s) \ge d~\vert \deg_\tau(s) \le X_\tau, \lnot \mathcal{D}_\varepsilon(\tau, h)\right] + \Pr[\mathcal{D}_\varepsilon(\tau, h)],
  \end{align*}

  Let us estimate the probability of the second event. If $h = \frac{3 A \tau \ln{t}}{\delta^2 (1 + \varepsilon) (p X_\tau + r)}$ and $\varepsilon = \frac{1}{5 \ln{\tau}}$, then the condition $h \le \frac{\varepsilon \tau}{p (1 + \varepsilon) \exp(2) \ln{\tau}}$ is met since for some constant $C > 0$ we have
    \begin{align*}
        h & \le \frac{C \tau \ln{t}}{\delta^2 X_\tau} = \frac{C \tau \ln{t} \cdot 25 \ln^2{\tau}}{\max\{\ln^{1 + p}{t}, \tau^p\}} = \frac{C \tau \ln{t} \cdot 25 \ln^2{\tau}}{\ln{t} \cdot \tau^{p^2 / (1 + p)}}
        \le \frac{\tau}{\ln^2{\tau}} \\
        & \le \frac{\tau}{p \cdot 2 \exp(2) \cdot 5 \ln{\tau}} \le \frac{\varepsilon \tau}{p (1 + \varepsilon) \exp(2) \ln{\tau}}
    \end{align*}
    and from \Cref{col:simple-tail} we obtain that $\Pr[\mathcal{D}_\varepsilon(\tau, h)] = O(t^{-A})$.
    Here we again used the facts that $X_\tau \ge \max\{\phi, \tau^p\}$ and $\max\{a, b\} \ge a^\gamma b^{1 - \gamma}$ for any $a, b > 0$ and $\gamma \in [0, 1]$.
    
    Thus, it is sufficient to bound $\deg_{\tau + h}(s) - \deg_\tau(s)$ with high probability when $\mathcal{D}_\varepsilon(\tau, h)$ does not hold, that is, when for all $i = 1, \ldots, h$ it is true that
    \begin{align*}
      \frac{\deg_{\tau + i}(s)}{\tau + i} < (1 + \varepsilon) \frac{X_\tau}{\tau}.
    \end{align*}
    It follows that $I_{\tau + i} = \deg_{\tau + i + 1}(s) - \deg_{\tau + i}(s)$ is stochastically dominated by independent random variables $I^*_{\tau + i} \sim Be\left((1 + \varepsilon) \frac{p X_\tau + r}{\tau}\right)$ for any $i = 0, 1, \ldots, h - 1$ -- since in the case of Bernoulli variables $Be(p_1)$ is stochastically dominated by $Be(p_2)$ whenever $p_1 \le p_2$.
    This way we can eliminate dependencies -- the outcome of each $I_{\tau}$ influences the distributions for $I_{\tau'}$, $\tau' > \tau$ -- and work with independent variables $I^*_{\tau + i}$.

    Now, since the new variables are both Bernoulli and independent, we can use the well-known left tail Chernoff bound for binomial setting from \cite{frieze-introduction} (see Corollary 21.7) which states that for any $\delta \in (0, 1)$
    \begin{align*}
      \Pr\!\left[\sum_{i = 0}^{h - 1} I^*_{\tau + i} \ge (1 + \delta)\,\E\left[\sum_{i = 0}^{h - 1} I^*_{\tau + i}\right]\right] & \le \exp\!\left(- \frac{\delta^2}{3} \E\left[\sum_{i = 0}^{h - 1} I^*_{\tau + i}\right]\right)
    \end{align*}
    and therefore
    \begin{align*}
      \Pr&\left[\deg_{\tau + h}(s) - \deg_\tau(s) \ge (1 + \delta) (1 + \varepsilon)\,\frac{h (p X_\tau + r)}{\tau}~\Bigg\vert \deg_\tau(s) \le X_\tau, \lnot \mathcal{D}_\varepsilon(\tau, h)\right] \\
      & \le \exp\!\left(- \frac{h \delta^2 (1 + \varepsilon) (p X_\tau + r)}{3 \tau}\right).
    \end{align*}
    
    To finish the proof it is sufficient to see that $h = \frac{3 A \tau \ln{t}}{\delta^2 (1 + \varepsilon) (p X_\tau + r)}$ gives the required $O(t^{-A})$ bound in the last equation.
\end{proof}

Finally, we proceed with the proof of the main result of this section.

\begin{theorem}
  \label{thm:upper-bound-early}
  For $G_t \sim \DD(t, p, r)$ with $0 < p < 1$ and $s \in [1, \ln^{1 + p}{t}]$ it holds asymptotically that
  \begin{align*}
    \Pr&\left[\deg_t(s) \ge (1 + \alpha)\,t^p \ln^{2 - p^2}{t}\right] = O(t^{-A})
  \end{align*}
  for any constants $\alpha > 0$ and $A > 0$.
\end{theorem}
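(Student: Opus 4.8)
The plan is to instantiate the abstract sequences of \Cref{def:sequences} so that the per-step Chernoff increment supplied by \Cref{col:chernoff-upper} matches the jump $X_{t_{i+1}} - X_{t_i}$ \emph{exactly}, and then to transport the deterministic bound $X_{t_i} \le \phi^{1-p} t_i^p \ln t_i$ of \Cref{lem:upper} onto $\deg_t(s)$ by a telescoping union bound. Concretely, I would take $\phi = \ln^{1+p}{t}$ and, writing $\delta_i = \varepsilon_i = \frac{1}{5\ln t_i}$, define the jumps by
\[
  w_i = \frac{3 A t_i \ln t}{\delta_i^2 (1 + \varepsilon_i)(p X_{t_i} + r)}, \qquad \beta_i = (1 + \delta_i)(1 + \varepsilon_i)\,\frac{p X_{t_i} + r}{X_{t_i}}.
\]
This is a well-posed recursion: from $t_i, X_{t_i}$ one reads off $\delta_i,\varepsilon_i$, then $w_i$ and $\beta_i$, and finally $t_{i+1}, X_{t_{i+1}}$ via \Cref{def:sequences}. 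The point of the choice of $\beta_i$ is that it forces $X_{t_{i+1}} - X_{t_i} = \beta_i \frac{w_i X_{t_i}}{t_i} = \frac{3A(1+\delta_i)}{\delta_i^2}\ln t$, which is precisely the threshold appearing in \Cref{col:chernoff-upper} applied at $\tau = t_i$ with $h = w_i$. Hence, conditioned on $\deg_{t_i}(s) \le X_{t_i}$, that lemma directly yields $\Pr[\deg_{t_{i+1}}(s) - \deg_{t_i}(s) \ge X_{t_{i+1}} - X_{t_i}] = O(t^{-A})$.

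With this in hand the high-probability bound follows by induction on the events $\mathcal{E}_i = \{\deg_{t_i}(s) \le X_{t_i}\}$. The base case $\mathcal{E}_0$ is deterministic since $\deg_{t_0}(s) \le t_0 = X_{t_0}$, and the inductive step is exactly the conditional estimate above, because on $\mathcal{E}_i$ the failure of $\mathcal{E}_{i+1}$ forces the increment $\deg_{t_{i+1}}(s) - \deg_{t_i}(s)$ to be at least $X_{t_{i+1}} - X_{t_i}$. A union bound over the (at most $t$) steps, after replacing $A$ by $A+1$, gives $\Pr[\neg\mathcal{E}_k] = O(t^{-A})$, i.e. $\deg_{t_k}(s) \le X_{t_k}$ with high probability. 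Since $t \le t_k$ implies $\deg_t(s) \le \deg_{t_k}(s)$, it remains to bound $X_{t_k}$ deterministically: \Cref{lem:upper} gives $X_{t_k} \le \phi^{1-p} t_k^p \ln t_k$, while $w_{k-1} \le \frac{t_{k-1}}{\ln^2 t_{k-1}} = o(t)$ forces $t_k = t(1 + o(1))$; together with $\phi^{1-p} = \ln^{1-p^2}{t}$ this produces $X_{t_k} \le (1+o(1)) t^p \ln^{2-p^2}{t} \le (1+\alpha) t^p \ln^{2-p^2}{t}$ for large $t$, completing the proof.

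Before any of this can be invoked one must verify that the chosen $(\beta_i, w_i)$ satisfy the hypotheses of \Cref{lem:lower-2} and \Cref{lem:upper}, namely $p - \frac{p(1-p)}{4\ln t_i} \le \beta_i \le p + \frac{1}{2\ln t_i}$ and $w_i \le \frac{t_i}{\ln t_i}$. The lower bound on $\beta_i$ is immediate since each factor defining it is at least $p$, and the bound on $w_i$ is exactly the estimate $w_i \le \frac{t_i}{\ln^2 t_i}$ already carried out inside the proof of \Cref{col:chernoff-upper}; note also that \Cref{lem:lower-2}, together with the monotonicity $X_{t_i} \ge X_{t_0} = \phi$, furnishes $X_{t_i} \ge \max\{\phi, t_i^p\}$, which is precisely the inequality licensing \Cref{col:chernoff-upper} in our instantiation. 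I expect the one genuinely delicate point to be the upper bound $\beta_i \le p + \frac{1}{2\ln t_i}$: writing $\beta_i - p = p[(1+\delta_i)(1+\varepsilon_i) - 1] + (1+\delta_i)(1+\varepsilon_i)\frac{r}{X_{t_i}}$, the first term equals $\frac{2p}{5\ln t_i} + o(\frac{1}{\ln t_i})$ and stays below $\frac{1}{2\ln t_i}$ only because $p < 1$, while the second term must be shown negligible using the crude bound $X_{t_i} \ge \phi = \ln^{1+p}{t}$ (rather than $X_{t_i} \ge t_i^p$) so that $\frac{r}{X_{t_i}} = o(\frac{1}{\ln t_i})$ uniformly in $i$.
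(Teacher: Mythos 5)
Your proof is correct and takes essentially the same approach as the paper's: the same choice of $\phi$, $\delta_i$, $\varepsilon_i$, $w_i$, the same inductive/telescoping use of \Cref{col:chernoff-upper} over the events $\deg_{t_i}(s)\le X_{t_i}$ with a union bound over $k\le t$ steps, and the same conversion of the deterministic bound on $X_{t_k}$ from \Cref{lem:upper} into $(1+\alpha)\,t^p\ln^{2-p^2}{t}$. The only difference is a cosmetic dualization of the bookkeeping: you define $\beta_i=(1+\delta_i)(1+\varepsilon_i)\left(p+\frac{r}{X_{t_i}}\right)$ so that the jump $X_{t_{i+1}}-X_{t_i}$ equals the Chernoff threshold exactly and then verify $\beta_i\le p+\frac{1}{2\ln{t_i}}$, whereas the paper fixes $\beta_i=p+\frac{1}{2\ln{t_i}}$ and verifies that the jump dominates the threshold --- the same inequality checked in opposite directions.
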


\begin{proof}
    Throughout the proof we will use sequences $(t_i)_{i = 0}^k$ and $(X_{t_i})_{i = 0}^k$ with $\phi = \ln^{1 + p}{t}$, $\beta_i = p + \frac{1}{2 \ln{t_i}}$, $w_i = \frac{3 (A + 1) t_i \ln{t}}{\delta^2 (1 + \varepsilon) (p X_{t_i} + r)}$, and $\varepsilon = \delta = \frac{1}{5 \ln{t_i}}$.

    Observe that all the assumptions of \Cref{lem:lower-2} and \Cref{lem:upper} are met, i.e. $p \le \beta_i \le p + \frac{1}{2 \ln{t_i}}$ and $w_i \le \frac{t_i}{\ln{t_i}}$, so we know that $\max\{\ln^{1 + p}{t}, t_i^p\} \le X_{t_i} \le t_i^p \ln^{2 - p^2}{t}$ for all $i = 0, 1, \ldots, k$.

    Now let us define events $\A_i(s) = [\deg_{t_i}(s) < X_{t_i}]$ for $i = 0, \ldots, k$.
    Clearly, $\A_0(s)$ holds since by definition of $X_{t_0}$ we have $\deg_{t_0}(s) < t_0 = X_{t_0}$.

    Suppose that $\A_i(s)$ holds. Then we can apply \Cref{col:chernoff-upper} with $\tau = t_i$ and $h = w_i$:
    \begin{align*}
     \Pr&[\lnot \A_{i + 1}(s) | \A_i(s)] = \Pr[\deg_{t_{i + 1}}(s) \ge X_{t_{i + 1}}~|~\deg_{t_i}(s) < X_{t_i}] \\
       & \le \Pr[\deg_{t_{i + 1}}(s) - \deg_{t_i}(s) \ge X_{t_{i + 1}} - X_{t_i}~|~\deg_{t_i}(s) < X_{t_i}] \\
       & = \Pr\!\left[\deg_{t_{i + 1}}(s) - \deg_{t_i}(s) \ge \beta_i \frac{w_i X_{t_i}}{t_i}~\Big\vert \deg_{t_i}(s) < X_{t_i}\right] \\
       & = \Pr\!\left[\deg_{t_{i + 1}}(s) - \deg_{t_i}(s) \ge \frac{\beta_i X_{t_i}}{(1 + \varepsilon) (p X_{t_i} + r)} \frac{3 (A + 1)}{\delta^2} \ln{t}~\Bigg\vert \deg_{t_i}(s) < X_{t_i}\right] \\
       & \le \Pr\!\left[\deg_{t_{i + 1}}(s) - \deg_{t_i}(s) \ge \frac{3 (A + 1) (1 + \delta)}{\delta^2} \ln{t}~\Bigg\vert \deg_{t_i}(s) < X_{t_i}\right] = O(t^{-A - 1}),
    \end{align*}
    where we used the fact that asymptotically as $t \to \infty$
    \begin{align*}
        \frac{\beta_i X_{t_i}}{(1 + \delta) (1 + \varepsilon) (p X_{t_i} + r)}
            & = \frac{\beta_i X_{t_i}}{X_{t_i} \left(p + \delta \left(p + p \varepsilon + \frac{r (1 + \varepsilon)}{X_{t_i}}\right) + \varepsilon \left(p + \frac{r}{X_{t_i}}\right)\right) + r} \\
            & \ge \frac{\beta_i}{p + \delta \left(p + p \varepsilon + \frac{r (1 + \varepsilon)}{X_{t_i}}\right) + \varepsilon \left(p + p \varepsilon + \frac{r (1 + \varepsilon)}{X_{t_i}}\right) + \frac{r}{X_{t_i}}} \\
            & \ge \frac{p + \frac{1}{2 \ln{t_i}}}{p + \delta + \varepsilon + \frac{r}{X_{t_i}}} \ge 1,
    \end{align*}
    where in the denominator of the first inequality we used the facts that $p + \frac{r}{X_{t_i}} \le p + p \varepsilon + \frac{r (1 + \varepsilon)}{X_{t_i}} = p + o(1) \le 1$ for any constants $0 < p < 1$, $0 \le r \le t_0$ when $t \to \infty$.
    
    Next, we get
    \begin{align*}
      \Pr&[\deg_t(s) \ge X_{t_k}] \le \Pr[\deg_{t_k}(s) \ge X_{t_k}] = \Pr[\lnot \A_k(s)] \\
      & \le \sum_{i = 0}^{k - 1} \Pr[\lnot \A_{i + 1}(s) | \A_i(s)] + \Pr[\lnot \A_0(s)] = \sum_{i = 0}^{k - 1} O(t^{-A - 1}) = O(t^{-A}),
    \end{align*}
    since asymptotically it is true that $w_i \ge 1$ for all $i = 0, \ldots, k$, and therefore $k \le t$.

    To complete the proof it is sufficient to note that $t_k = t_{k - 1} (1 + \alpha) \le (1 + \alpha) t$ and thus $X_{t_k} \le (1 + \alpha) t^p \ln^{2 - p^2}{t}$ for any constant $\alpha > 0$.
\end{proof}

\subsection[Upper bound, late vertices]{Upper bound, late vertices $(s > t_0)$}
\label{sec:maximum-upper-late}

In the second part of the proof we also use the sequences $(t_i)_{i = 0}^k$ and $(X_{t_i})_{i = 0}^k$ as defined in \Cref{def:sequences}.
Moreover,  throughout this section we use the same constants as in the proof of \Cref{thm:upper-bound-early}: $\phi = \ln^{1 + p}{t}$, $\beta_i = p + \frac{1}{2 \ln{t_i}}$ and $w_i = \frac{3 (A + 1) t_i \ln{t}}{\delta^2 (1 + \varepsilon) (p X_{t_i} + r)}$.

The proof consists of showing that for $s \in [t_i, t_{i + 1})$ for some $i = 0, 1, \ldots, k - 1$ the degree graph (i.e. $\deg_s(s)$) is with high probability significantly smaller than its corresponding  $X_{t_{i + 1}}$.
Furthermore, we show that the increase in the degree between $\deg_s(s)$ and $\deg_{t_{i + 1}}(s)$ with high probability cannot compensate for this difference.
Thus, $X_t$ (or, to be more precise, $X_{t_k}$) gives us a good upper bound on $\deg_t(s)$ for all $s$ -- and therefore also we obtain an upper bound for $\Delta(G_t)$.

Let us introduce auxiliary events $\B_l(s) = \bigcup_{\tau = 1}^s \A_l(\tau) = [\deg_{t_l}(\tau)\le X_{t_l}$ for all $\tau \le s \le t_l$] where $\A_i(s)$ is, as before, the event that $\deg_{t_i}(s) \le X_{t_i}$ for a fixed $s \le t_i$.

\begin{lemma}
  \label{lem:start}
  Let $s \in (t_l, t_{l + 1}]$ for some $l = 0, 1, \ldots, k - 1$. Then, for any constants $\varepsilon > 0$ and $A > 0$
  \begin{align*}
    \Pr&\left[\deg_s(s) \ge (1 + \varepsilon) (p X_{t_{l + 1}} + r)~\vert \B_l(t_l) \land \B_{l + 1}(s - 1)\right]
      = O(t^{-A}).
  \end{align*}
\end{lemma}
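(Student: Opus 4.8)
The plan is to view $\deg_s(s)$ as a sum of independent Bernoulli trials, conditionally on the graph $G_{s-1}$ and on the parent $u=\parent{s}$, and then apply an upper-tail Chernoff bound. When vertex $s$ arrives, it attaches to each of the $\deg_{s-1}(u)$ neighbours of $u$ independently with probability $p$ and to each of the remaining $s-1-\deg_{s-1}(u)$ vertices independently with probability $\frac{r}{s-1}$, so
\begin{align*}
  \E[\deg_s(s)\mid G_{s-1},u]=p\deg_{s-1}(u)+\frac{r}{s-1}\bigl(s-1-\deg_{s-1}(u)\bigr)\le p\deg_{s-1}(u)+r.
\end{align*}
The sole purpose of the conditioning event is to make $\deg_{s-1}(u)$ small: since $u\le s-1$ and degrees are non-decreasing in time, $\B_{l+1}(s-1)$ forces $\deg_{s-1}(u)\le\deg_{t_{l+1}}(u)\le X_{t_{l+1}}$, so the conditional mean is at most $M:=pX_{t_{l+1}}+r$ for \emph{every} admissible parent.

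The step I expect to be delicate is that the conditioning is not measurable with respect to $G_s$: the event $\B_{l+1}(s-1)$ speaks about the degrees $\deg_{t_{l+1}}(\tau)$ at the \emph{future} time $t_{l+1}>s$, and these depend on the very edges created at step $s$, so the step-$s$ trials are not genuinely independent of $\B$. I would sidestep this rather than assert independence directly. Monotonicity of degrees gives the inclusion of events $\B\subseteq\{\deg_{s-1}(u)\le X_{t_{l+1}}\}$, whence
\begin{align*}
  \Pr[\deg_s(s)\ge(1+\varepsilon)M\mid\B]\le\frac{\Pr\bigl[\{\deg_s(s)\ge(1+\varepsilon)M\}\cap\{\deg_{s-1}(u)\le X_{t_{l+1}}\}\bigr]}{\Pr[\B]}.
\end{align*}
The numerator now involves only a $G_s$-measurable event, which I can evaluate by first conditioning on $G_{s-1}$ and $u$ -- where the trials are truly independent and, on the indicator event, the mean is at most $M$ -- and then averaging.

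Finally I would bound this numerator by Chernoff. Writing the true conditional mean $\mu\le M$ (the case $\mu=0$ being trivial) and the threshold $(1+\varepsilon)M=(1+\delta)\mu$, one has $\delta\ge\varepsilon$, and whenever $\delta<1$ necessarily $\mu>M/2$; in both branches of the standard estimate the exponent is $\Omega(\varepsilon^2 M)$. Because $X_{t_{l+1}}\ge X_{t_0}=\phi=\ln^{1+p}t$, we get $M\ge p\ln^{1+p}t$, so the numerator is $\exp(-\Omega(\varepsilon^2\ln^{1+p}t))$, which is smaller than $t^{-B}$ for every constant $B$ since $\ln^{1+p}t/\ln t=\ln^{p}t\to\infty$. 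The robust, essentially unconditional content is therefore this super-polynomial bound on the joint probability $\Pr[\{\deg_s(s)\ge(1+\varepsilon)M\}\cap\B]$; the conditional statement follows because $\Pr[\B]$ is bounded away from $0$ in the regime where the lemma is applied (the surrounding induction having already established that $\B$ holds with high probability up to vertex $s-1$), and dividing a super-polynomially small quantity by such a denominator still yields $O(t^{-A})$ for every constant $A>0$. Note that here $\varepsilon$ is a fixed constant, so the deviation is a constant fraction and no delicate tuning of the error parameters is needed, in contrast with the earlier lemmas.
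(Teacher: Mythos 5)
Your proposal is correct, and its core mechanism is the same as the paper's: use monotonicity of degrees to turn the conditioning event into the statement that every admissible parent has degree at most $X_{t_{l+1}}$, dominate $\deg_s(s)$ by $Bin(X_{t_{l+1}}, p) + Bin(s-1, \frac{r}{s-1})$ with mean $p X_{t_{l+1}} + r$, apply the upper-tail Chernoff bound, and get super-polynomial decay from $X_{t_{l+1}} \ge \phi = \ln^{1+p}{t}$. Where you genuinely differ is the treatment of the conditioning. The paper asserts the stochastic domination \emph{conditionally on} $\B_l(t_l) \land \B_{l+1}(s-1)$ and plugs it directly into Chernoff, without addressing the point you isolate: $\B_{l+1}(s-1)$ constrains degrees at the future time $t_{l+1} > s$, so it is not independent of the step-$s$ coin flips and the domination under that conditioning is not immediate. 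Your route -- passing to the joint probability through the $G_{s-1}$-measurable event $\{\deg_{s-1}(u) \le X_{t_{l+1}}\}$ and then dividing by $\Pr[\B_l(t_l) \land \B_{l+1}(s-1)]$ -- is the more careful one, but it proves a formally weaker statement than the lemma: it requires the conditioning event to have probability that is not itself super-polynomially small, a hypothesis absent from the lemma and, in the paper's logical order, only established by the induction of \Cref{lem:jump-3}, which cites \Cref{lem:jump-2} and hence this very lemma. To make your version airtight one must restructure the surrounding argument into a joint induction over $(l, s)$ that carries a lower bound such as $\Pr[\B_l(t_l) \land \B_{l+1}(s)] \ge \frac{1}{2}$ alongside the conditional estimates; this is routine, and since your numerator is $\exp(-\Omega(\varepsilon^2 \ln^{1+p}{t}))$, even a polynomially small denominator would suffice. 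So: same essential proof, but you trade the paper's silent measurability step for an explicit dependence on the high-probability status of $\B$, which is a defensible -- arguably more honest -- bookkeeping choice rather than a gap.
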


\begin{proof}
 First, we notice the fact that $\max\{\deg_{t_{l + 1}}(\tau)\colon 1 \le \tau \le s - 1\} \le X_{t_{l + 1}}$ guarantees that $\max\{\deg_s(\tau)\colon 1 \le \tau \le s - 1\} \le X_{t_{l + 1}}$.
 Therefore, $\deg_s(s)$ is stochastically dominated by $A_s \sim Bin(X_{t_{l + 1}}, p) + Bin(s - 1, \frac{r}{s - 1})$ and we directly obtain the result  using the Chernoff bound with $\E[A_s] = p X_{t_{l + 1}} + r$:
  \begin{align*}
    \Pr&\left[\deg_s(s) \ge (1 + \varepsilon) (p X_{t_{l + 1}} + r)~\Big\vert \B_l(t_l) \land \B_{l + 1}(s - 1)\right] \\
     & \le \exp\!\left(- \frac{\varepsilon^2}{\varepsilon + 2} (p X_{t_{l + 1}} + r)\right) \le t^{-A},
  \end{align*}
  asymptotically for any constants $\varepsilon, A > 0$ since $X_{t_{l + 1}} \ge \ln^{1 + p}{t}$.
\end{proof}

Note that the result implies that with high probability at most slightly more than a $p$ fraction of the maximum degree is already present at time $s$.
Therefore, we are interested in bounding the remaining part of the degree, i.e. $\deg_{t_{l + 1}}(s) - \deg_s(s)$, by something smaller than the remaining fraction of the maximum degree.

\begin{lemma}
  \label{lem:jump-1}
  Let $s \in (t_l, t_{l + 1}]$ for some $l = 0, 1, \ldots, k - 1$. Then, for any constant $\alpha > 0$ and $A > 0$
  \begin{align*}
    \Pr&\left[\deg_{t_{l + 1}}(s) - \deg_s(s) \ge \alpha X_{t_{l + 1}}~\vert \B_l(t_l) \land \B_{l + 1}(s - 1)\right] = O(t^{-A}).
  \end{align*}
\end{lemma}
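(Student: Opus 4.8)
The plan is to reduce the growth $\deg_{t_{l+1}}(s)-\deg_s(s)$ to a single application of \Cref{lem:simple-growth}, after first pinning down the starting degree $\deg_s(s)$ with \Cref{lem:start}. I set $\tau=s$, $h=t_{l+1}-s$ and $d=\alpha X_{t_{l+1}}$; since $s\in(t_l,t_{l+1}]$ and $t_{l+1}=t_l+w_l$ we have $h<w_l$ and $s>t_l$, hence $\frac{h}{s}<\frac{w_l}{t_l}$, which is the quantity that will be forced to zero. (If $h<d$ the growth cannot reach $d$ and the probability is trivially $0$, so I may assume $d\le h$ as required by \Cref{lem:simple-growth}.)

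First I would peel off the event that $s$ already arrives with too large a degree. Writing $B=(1+\varepsilon)(pX_{t_{l+1}}+r)$ with $\varepsilon=\frac{1}{5\ln t_l}$,
\[
  \Pr\!\left[\deg_{t_{l+1}}(s)-\deg_s(s)\ge\alpha X_{t_{l+1}}\mid\B_l(t_l)\land\B_{l+1}(s-1)\right]\le T_1+T_2,
\]
where $T_2=\Pr[\deg_s(s)>B\mid\B_l(t_l)\land\B_{l+1}(s-1)]=O(t^{-A})$ is exactly \Cref{lem:start}, and $T_1$ is the same probability with the extra conditioning $\deg_s(s)\le B$.

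For $T_1$ I apply \Cref{lem:simple-growth}. Its bound is increasing in $\deg_\tau(s)$, so on $\{\deg_s(s)\le B\}$ I may substitute $\deg_s(s)=B$; using $pB+p\alpha X_{t_{l+1}}+r=(p^2+p\alpha+o(1))X_{t_{l+1}}$ (because $r=O(1)$, $\varepsilon\to0$ and $X_{t_{l+1}}\to\infty$) the exponent is at most
\[
  \alpha X_{t_{l+1}}\ln\!\left(\frac{2e\,p(p+\alpha)}{\alpha}\cdot\frac{w_l}{t_l}\right).
\]
Exactly as in the proof of \Cref{col:chernoff-upper} one has $\frac{w_l}{t_l}\le\frac{1}{\ln^2 t_l}=o(1)$, so the logarithm is at most $-\ln\ln t_l$ for large $t$; combined with $X_{t_{l+1}}\ge\max\{\ln^{1+p}t,t_{l+1}^p\}\ge\ln^{1+p}t$ this makes the exponent $-\Omega(\ln^{1+p}t\cdot\ln\ln\ln t)$, which is below $-(A+1)\ln t$ for $t$ large. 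Hence $T_1=O(t^{-A})$, and adding $T_2$ finishes the bound.

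The step I expect to be the real obstacle is purely measure-theoretic: \Cref{lem:simple-growth} controls $\Pr[\,\cdot\mid\mathcal{F}_s]$, i.e.\ conditions only on the history up to the arrival time $s$, whereas $\B_{l+1}(s-1)$ is an event about time $t_{l+1}>s$ and is therefore not $\mathcal{F}_s$-measurable; conditioning on this piece of the future could a priori bias the forward growth of $\deg(s)$. I would handle this by conditioning on $\mathcal{F}_s$ (on which $\B_l(t_l)$ and $\{\deg_s(s)\le B\}$ are measurable and the forward estimate applies verbatim), writing $T_1$ as a ratio of probabilities and dropping $\B_{l+1}(s-1)$ from the numerator, where it only shrinks an intersection. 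This gives $T_1\le O(t^{-A-1})/\Pr[\B_{l+1}(s-1)\mid\B_l(t_l),\deg_s(s)\le B]$, and since $\B_{l+1}(s-1)$ holds with probability $1-o(1)$ conditional on the good past --- precisely the inductive statement that the surrounding late-vertex argument maintains --- the denominator is at least $\frac{1}{2}$ and the desired $O(t^{-A})$ bound survives.
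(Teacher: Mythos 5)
Your proof rests on the same engine as the paper's: a single application of \Cref{lem:simple-growth} at $\tau=s$ with $d=\alpha X_{t_{l+1}}$ and window length at most $w_l$, followed by the observation that $h/s<w_l/t_l\le 1/\ln^2 t_l$ together with $X_{t_{l+1}}\ge\ln^{1+p}t$ drives the exponent below $-(A+1)\ln t$; that is essentially the paper's computation. The two places where you deviate are exactly the two places where the paper is cavalier. (a) The paper never invokes \Cref{lem:start} here (it reserves that for \Cref{lem:jump-2}); instead it silently replaces $p\deg_s(s)+p\alpha X_{t_{l+1}}+r$ by $p(1+2\alpha)X_{t_{l+1}}$, i.e.\ it uses the conditioning event $\B_{l+1}(s-1)$ to bound the starting degree without saying so. Your decomposition into $T_1+T_2$, with \Cref{lem:start} supplying $\deg_s(s)\le B$, makes this step explicit. (b) The paper disposes of the conditioning by asserting the equality $\Pr[\,\cdot\mid\B_l(t_l)\land\B_{l+1}(s-1)\,]=\Pr[\,\cdot\,]$, which is not literal independence: $\B_{l+1}(s-1)$ lives partly in the future of time $s$ and is correlated with the growth of $\deg(s)$. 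You correctly identify this as the real obstacle, and your ratio argument (drop the future event from the numerator, divide by its conditional probability given the $\mathcal{F}_s$-measurable past) is the natural repair.

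Two caveats on your write-up. First, you call on \Cref{lem:start} with $\varepsilon=\frac{1}{5\ln t_l}$, but that lemma is stated for \emph{constant} $\varepsilon>0$, and with vanishing $\varepsilon$ its Chernoff exponent $\frac{\varepsilon^2}{\varepsilon+2}(pX_{t_{l+1}}+r)$ is no longer $\gg\ln t$ from $X_{t_{l+1}}\ge\ln^{1+p}t$ alone (e.g.\ when $t_l$ is polynomial in $t$, so that $\ln^2 t_l=\Theta(\ln^2 t)$); one would additionally need the geometric-mean estimate $X_\tau\ge\ln t\cdot\tau^{p^2/(1+p)}$ as in \Cref{col:simple-tail}. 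Nothing in your argument requires $\varepsilon\to0$: take $\varepsilon$ constant, so $B\le 2X_{t_{l+1}}$ and only the harmless constant inside your logarithm changes. Second, your denominator bound $\Pr[\B_{l+1}(s-1)\mid\B_l(t_l),\deg_s(s)\le B]\ge\frac{1}{2}$ is not a fact you may cite at the moment this lemma is being proved: in the paper's architecture it is essentially the content of \Cref{lem:jump-3} restricted to the vertices before $s$, and \Cref{lem:jump-3} is proved \emph{from} \Cref{lem:jump-1} via \Cref{lem:jump-2}. To avoid circularity you must fuse \Cref{lem:start}--\Cref{lem:jump-3} into a single induction over $s$, in which the invariant at $s-1$ feeds your denominator at $s$; this is doable (and is arguably what the paper itself would need in order to justify its claimed equality), but it is a restructuring of the section, not a citation of an available lemma.
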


\begin{proof}
    We use \Cref{lem:simple-growth} with $d = \alpha X_{t_{l + 1}}$ to obtain asymptotically as $t \to \infty$ that for any $A > 0$ it holds that
    \begin{align*}
        \Pr&\left[\deg_{t_{l + 1}}(s) - \deg_s(s) \ge \alpha X_{t_{l + 1}}~\vert \B_l(t_l) \land \B_{l + 1}(s - 1)\right] \\
            & = \Pr\!\left[\deg_{t_{l + 1}}(s) - \deg_s(s) \ge \alpha X_{t_{l + 1}}\right]
              \le \Pr\!\left[\deg_{s + w_l}(s) - \deg_s(s) \ge \alpha X_{t_{l + 1}}\right] \\
            & \le \exp\!\left(\alpha X_{t_{l + 1}} \ln{\frac{\exp(1) \cdot w_l p (1 + 2 \alpha) X_{t_{l + 1}}}{\alpha X_{t_{l + 1}} \cdot s}}\right) \\
            & \le \exp\!\left(\alpha X_{t_{l + 1}} \left(\frac{\exp(1) \cdot (1 + 2 \alpha) \cdot 3 (A + 1)}{\alpha (1 + \alpha)} + \ln{\frac{\ln{t}}{\delta^2 (X_{t_l} + r / p)}}\right)\right) \\
            & \le \exp\!\left(\alpha X_{t_{l + 1}} \left(\Theta(1) + \ln{\frac{25 \ln{t} \cdot \ln^2{t_l}}{\max\{\ln^{1 + p}{t}, t_l^p\}}}\right)\right) \\
            & \le \exp\!\left(\alpha \ln^{1 + p}{t} \left(\Theta(1) + \ln{\frac{25 \ln^2{t_l}}{t_l^{p^2 / (1 + p)}}}\right)\right)
              \le \exp\!\left(-A \ln{t}\right) \le t^{-A}
    \end{align*}
    as needed.
\end{proof}

To proceed we need the following two lemmas.

\begin{lemma}
  \label{lem:jump-2}
  Let $s \in (t_l, t_{l + 1}]$ for some $l = 0, 1, \ldots, k - 1$.
  Then asymptotically as $t \to \infty$, for any constant $A > 0$ it holds that
  \begin{align*}
    \Pr&\left[\deg_{t_{l + 1}}(s) \ge X_{t_{l + 1}} | \B_l(t_l) \land \B_{l + 1}(s - 1)\right] = O(t^{-A}).
  \end{align*}
\end{lemma}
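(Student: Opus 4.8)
The plan is to decompose $\deg_{t_{l+1}}(s)$ into the degree of $s$ at its arrival time, $\deg_s(s)$, plus the subsequent increment $\deg_{t_{l+1}}(s) - \deg_s(s)$, and then to control each piece with high probability using the two preceding lemmas under the same conditioning event $\B_l(t_l) \land \B_{l+1}(s-1)$. Since both \Cref{lem:start} and \Cref{lem:jump-1} are stated for arbitrary positive constants, the first step is to fix $\varepsilon > 0$ and $\alpha > 0$ small enough that $(1+\varepsilon) p + \alpha < 1$; this is possible precisely because $0 < p < 1$, so $1 - p$ is a strictly positive constant and there is room for both contributions.

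With these constants fixed, \Cref{lem:start} gives $\deg_s(s) \le (1+\varepsilon)(p X_{t_{l+1}} + r)$ outside an event of probability $O(t^{-A})$, and \Cref{lem:jump-1} gives $\deg_{t_{l+1}}(s) - \deg_s(s) \le \alpha X_{t_{l+1}}$ outside another event of probability $O(t^{-A})$. On the complement of the union of these two bad events I would add the two bounds to obtain
\begin{align*}
  \deg_{t_{l+1}}(s) \le (1+\varepsilon)(p X_{t_{l+1}} + r) + \alpha X_{t_{l+1}} = \big((1+\varepsilon) p + \alpha\big) X_{t_{l+1}} + (1+\varepsilon) r.
\end{align*}

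It then remains to verify that this quantity is strictly below $X_{t_{l+1}}$ for large $t$. Dividing by $X_{t_{l+1}}$ and recalling that $X_{t_{l+1}} \ge \phi = \ln^{1+p}{t} \to \infty$ while $r$ is a constant, the additive $(1+\varepsilon) r$ contributes only a vanishing $\frac{(1+\varepsilon) r}{X_{t_{l+1}}} = o(1)$ term, so asymptotically the right-hand side equals $\big((1+\varepsilon) p + \alpha + o(1)\big) X_{t_{l+1}} < X_{t_{l+1}}$ by our choice of constants. A final union bound over the two excluded events then yields the claimed $O(t^{-A})$ probability.

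The work here is almost entirely bookkeeping: the two substantive estimates have already been carried out in \Cref{lem:start} and \Cref{lem:jump-1}, and the only genuine structural requirement is the strict inequality $p < 1$, which guarantees the slack needed to accommodate both the $(1+\varepsilon) p$ fraction of the degree accrued at arrival and the $\alpha$ fraction accrued afterwards while keeping the total below $X_{t_{l+1}}$. I do not anticipate a real obstacle; the mildest subtlety is confirming that the constant $r$ term is negligible against $X_{t_{l+1}}$, which follows immediately from $X_{t_{l+1}} \to \infty$.
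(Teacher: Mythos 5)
Your proposal is correct and follows essentially the same route as the paper: the paper likewise splits $\deg_{t_{l+1}}(s)$ into $\deg_s(s)$ and the increment $\deg_{t_{l+1}}(s)-\deg_s(s)$, controls the two pieces via \Cref{lem:start} and \Cref{lem:jump-1} under the same conditioning, and concludes by a union bound. The only difference is cosmetic: the paper fixes the explicit constants $\varepsilon = \frac{1-p}{4p}$ and $\alpha = \frac{1-p}{2}$ (which satisfy your condition $(1+\varepsilon)p+\alpha = \frac{3+p}{4} < 1$), whereas you leave the constants generic subject to that same constraint.
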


\begin{proof}
  We combine \Cref{lem:start} with $\varepsilon = \frac{1 - p}{4 p}$ and \Cref{lem:jump-1} with $\alpha = \frac{1 - p}{2}$ to obtain
  \begin{align*}
      \Pr&\left[\deg_{t_{l + 1}}(s) \ge X_{t_{l + 1}}~\vert \B_l(t_l) \land \B_{l + 1}(s - 1)\right] \\
        & \le \Pr\!\left[\deg_s(s) \ge \left(1 + \frac{1 - p}{4 p}\right) (p X_{t_{l + 1}} + r)~\Big\vert \B_l(t_l) \land \B_{l + 1}(s - 1)\right] \\
        & \quad + \Pr\!\left[\deg_{t_{l + 1}}(s) - \deg_s(s) \ge \frac{1 - p}{2} X_{t_{l + 1}}~\Big\vert \B_l(t_l) \land \B_{l + 1}(s - 1)\right] = O(t^{-A}).
  \end{align*}
\end{proof}

\begin{lemma}
  \label{lem:jump-3}
  Let $s \in (t_l, t_{l + 1}]$ for some $l = 0, 1, \ldots, k - 1$.
  Then asymptotically as $t \to \infty$, for any constant $A > 0$ it holds that
  \begin{align*}
    \Pr&\left[\lnot\B_{l + 1}(t_{l + 1}) | \B_l(t_l)\right] = O(t^{-A}).
  \end{align*}
\end{lemma}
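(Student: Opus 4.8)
The plan is to bound $\lnot\B_{l+1}(t_{l+1})$ by a union bound over the \emph{first} vertex whose degree at time $t_{l+1}$ escapes $X_{t_{l+1}}$, and to dispatch early and late vertices by the two mechanisms already assembled in this section. Writing $\B_{l+1}(0)$ for the sure event, the events $\lnot\A_{l+1}(\tau)\cap\B_{l+1}(\tau-1)$, $\tau=1,\dots,t_{l+1}$, form a partition of $\lnot\B_{l+1}(t_{l+1})$ (the index $\tau$ being the first violation), so that
\begin{align*}
  \Pr[\lnot\B_{l+1}(t_{l+1})\mid\B_l(t_l)]
    = \sum_{\tau=1}^{t_{l+1}} \Pr[\lnot\A_{l+1}(\tau)\land\B_{l+1}(\tau-1)\mid\B_l(t_l)].
\end{align*}
Since $t_{l+1}\le t_k=O(t)$, it suffices to show that each summand is $O(t^{-A-1})$.

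I would treat the two ranges of $\tau$ separately. For a late vertex $\tau\in(t_l,t_{l+1}]$, the summand is at most $\Pr[\lnot\A_{l+1}(\tau)\mid\B_{l+1}(\tau-1)\land\B_l(t_l)]$, which is exactly the probability bounded by \Cref{lem:jump-2} with $s=\tau$; applying that lemma with the constant $A+1$ in place of $A$ gives the required $O(t^{-A-1})$. For an early vertex $\tau\le t_l$, I would discard the future event $\B_{l+1}(\tau-1)$ from the intersection, using $\Pr[X\cap Y\mid Z]\le\Pr[X\mid Z]$, so the summand is at most $\Pr[\deg_{t_{l+1}}(\tau)>X_{t_{l+1}}\mid\B_l(t_l)]$. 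As $\B_l(t_l)$ entails $\deg_{t_l}(\tau)\le X_{t_l}$, the Chernoff-type growth estimate of \Cref{col:chernoff-upper}, applied over the interval $[t_l,t_{l+1}]$ exactly as in the proof of \Cref{thm:upper-bound-early}, controls this: by the choice of $w_l$ the increment threshold $X_{t_{l+1}}-X_{t_l}=\beta_l w_l X_{t_l}/t_l$ is at least $\frac{3(A+1)(1+\delta)}{\delta^2}\ln t$, whence the probability is $O(t^{-A-1})$.

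The point needing care is the handling of the conditioning, and it is resolved differently in the two cases. For late vertices the future event $\B_{l+1}(\tau-1)$ is indispensable—it is precisely what lets \Cref{lem:start} bound the degree $\deg_\tau(\tau)$ that the arriving vertex inherits from its parent at birth—so it cannot be dropped, and \Cref{lem:jump-2} delivers exactly this conditional estimate. For early vertices, conditioning a single-vertex growth bound on a future event would be awkward, which is why I retain only the past event $\B_l(t_l)$; by the Markov property, conditioning on $\B_l(t_l)$ amounts to averaging over admissible graphs $G_{t_l}$, and for every such graph the pathwise bound $q_{t_l+i}\le\frac{p\deg_{t_l+i}(\tau)+r}{t_l+i}$ still drives the stochastic domination underlying \Cref{col:chernoff-upper}, so the estimate holds uniformly. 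Summing the $O(t^{-A-1})$ contributions over the $O(t)$ vertices then yields $\Pr[\lnot\B_{l+1}(t_{l+1})\mid\B_l(t_l)]=O(t^{-A})$, as claimed.
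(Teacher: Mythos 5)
Your proof is correct, and its skeleton coincides with the paper's: both decompose $\lnot\B_{l+1}(t_{l+1})$ by the first vertex violating the bound at time $t_{l+1}$, dispatch late vertices $\tau\in(t_l,t_{l+1}]$ via \Cref{lem:jump-2} (applied with $A+1$), and early vertices $\tau\le t_l$ via \Cref{col:chernoff-upper} with the increment threshold $X_{t_{l+1}}-X_{t_l}=\beta_l w_l X_{t_l}/t_l$, exactly as in \Cref{thm:upper-bound-early}. Where you genuinely depart from the paper is in the one delicate step: how to justify applying the early-vertex Chernoff estimate under the conditioning $\B_l(t_l)$ rather than under $\A_l(\tau)$ (conditioning on a sub-event does not in general preserve an upper bound on a conditional probability). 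The paper resolves this with a minimal-counterexample induction on $l$: assuming the lemma for smaller indices, it deduces $\Pr[\B_l(t_l)]=1-O(t^{-A})$ from \Cref{thm:upper-bound-early} and chaining, then uses $\B_l(t_l)\subseteq\A_l(s)$ to bound $\Pr[\lnot\A_{l+1}(s)\mid\B_l(t_l)]$ by $\Pr[\lnot\A_{l+1}(s)\mid\A_l(s)]/\Pr[\B_l(t_l)]$, and finally glues the early and late halves with the inequality $\Pr[\lnot E_1\mid E_2]\le\Pr[\lnot E_1\mid E_2\land E_3]+\Pr[\lnot E_3\mid E_2]$. You instead observe that $\B_l(t_l)$ is measurable with respect to $G_{t_l}$ and that the post-$t_l$ evolution of $\deg(\tau)$ depends on $G_{t_l}$ only through $\deg_{t_l}(\tau)$, so the bound of \Cref{col:chernoff-upper} holds uniformly over every admissible realization of $G_{t_l}$ and therefore survives averaging over $\B_l(t_l)$. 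This is a sound argument (the attachment probability $q_{\tau+i}$ in \Cref{lem:simple-growth} is a function of the current degree and time alone, which is what makes the uniformity claim true), and it buys a cleaner, non-inductive proof that never needs a lower bound on $\Pr[\B_l(t_l)]$; what the paper's route buys is that it stays entirely at the level of the event algebra of the $\A_i$'s and $\B_i$'s, never invoking conditional laws given the whole graph, at the price of the induction and the ratio estimate.
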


\begin{proof}
  Let $l$ be the first value for which the lemma does not hold.
  Then, from \Cref{lem:jump-2} we get that for any constant $A > 0$ it holds that
  \begin{align*}
    \Pr&\left[\lnot\B_{l + 1}(t_{l + 1}) | \B_l(t_l) \land \B_{l + 1}(t_l)\right]
        = \sum_{s = t_l}^{t_{l + 1} - 1} \Pr\!\left[\lnot\B_{l + 1}(s + 1) | \B_l(t_l) \land \B_{l + 1}(s)\right] \\
      & = \sum_{s = t_l}^{t_{l + 1} - 1} \Pr\!\left[\lnot\A_{l + 1}(s + 1) | \B_l(t_l) \land \B_{l + 1}(s)\right]
        = O(t^{-A}).
  \end{align*}
  
  From \Cref{thm:upper-bound-early} we know that $\Pr\!\left[\B_0(t_0)\right] = 1 - O(t^{-A})$.
  Recall that by our assumption $\Pr\!\left[\lnot\B_{i + 1}(t_{i + 1}) | \B_i(t_i)\right] = 1 - O(t^{-A})$ for all $i = 0, 1, \ldots, l - 1$, so it follows that $\Pr\!\left[\B_i(t_i)\right] = 1 - O(t^{-A})$ for all $i = 0, 1, \ldots, l$.
  We use this fact, combined with the observation that $\B_l(t_l) \subseteq \A_l(s)$ and \Cref{thm:upper-bound-early} to get
  \begin{align*}
    \Pr&\left[\lnot \B_{l + 1}(t_l) | \B_l(t_l)\right]
        \le \sum_{s = 1}^{t_l} \Pr\!\left[\lnot \A_{l + 1}(s) | \B_l(t_l)\right] \\
        & \le \sum_{s = 1}^{t_l} \frac{\Pr\!\left[\lnot \A_{l + 1}(s) \land \B_l(t_l)\right]}{\Pr\!\left[\B_l(t_l)\right]}
          \le \sum_{s = 1}^{t_l} \frac{\Pr\!\left[\lnot \A_{l + 1}(s) \land \A_l(s)\right]}{\Pr\!\left[\B_l(t_l)\right]} \\
        & \le \sum_{s = 1}^{t_l} \frac{\Pr\!\left[\lnot \A_{l + 1}(s) | \A_l(s)\right]}{\Pr\!\left[\B_l(t_l)\right]}
        = \sum_{s = 1}^{t_l} \frac{O(t^{-A})}{1 - O(t^{-A})} = O(t^{-A}).
  \end{align*}
  
  Finally, for any events $E_1$, $E_2$, $E_3$ we have
  \begin{align*}
      \Pr[\lnot E_1 | E_2]
        & = \Pr[\lnot E_1 \land E_3 | E_2] + \Pr[\lnot E_1 \land \lnot E_3 | E_2] \\
        & \le \Pr[\lnot E_1 | E_3 \land E_2] + \Pr[\lnot E_3 | E_2].
  \end{align*}
  We substitute $E_1 = \B_{l + 1}(t_{l + 1})$, $E_2 = \B_l(t_l)$ and $E_3 = \B_{l + 1}(t_l)$ to obtain the final result.
\end{proof}

Finally, we present the main result of this section.

\begin{theorem}
  \label{thm:jump}
  For $G_t \sim \DD(t, p, r)$ with $0 < p < 1$ and any constants $\alpha, A > 0$ it holds asymptotically that
  \begin{align*}
    \Pr&\left[\Delta(G_t) \ge (1 + \alpha) t^p \ln^{2 - p^2}{t}\right] = O(t^{-A}).
  \end{align*}
\end{theorem}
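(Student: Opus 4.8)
The plan is to reduce the whole maximum-degree bound to the single global event $\B_k(t_k)$ — that every vertex degree stays beneath the deterministic envelope $X_{t_k}$ — and then to certify that this event is typical by chaining the per-scale estimate already proved in \Cref{lem:jump-3}. First I would exploit monotonicity: in the $\DD(t,p,r)$ model edges are only ever attached to a newly arriving vertex and are never deleted, so $\deg_\tau(s)$ is nondecreasing in $\tau$ for each fixed $s$. Since $t \le t_k$ by \Cref{def:sequences}, this gives $\deg_t(s) \le \deg_{t_k}(s)$ for all $s \le t$, and hence on $\B_k(t_k)$ one has $\Delta(G_t) = \max_{s \le t} \deg_t(s) \le X_{t_k}$ \emph{deterministically}. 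Feeding in the envelope estimate $X_{t_k} \le t_k^p \ln^{2 - p^2}{t}$ from \Cref{lem:upper}, together with the fact that the final relative jump is negligible, $w_{k-1}/t_{k-1} = o(1)$, so that $t_k = t_{k-1}(1 + o(1)) \le (1+\alpha)^{1/p} t$ for large $t$ and any fixed $\alpha > 0$, I obtain $\Delta(G_t) \le (1+\alpha)\, t^p \ln^{2 - p^2}{t}$ on $\B_k(t_k)$. Consequently the target event $\{\Delta(G_t) \ge (1+\alpha)\,t^p \ln^{2 - p^2}{t}\}$ is contained in $\lnot\B_k(t_k)$.

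It then remains to show $\Pr[\lnot \B_k(t_k)] = O(t^{-A})$. I would establish this by a telescoping conditional bound along the scales $t_0 < t_1 < \cdots < t_k$. Iterating the elementary inequality $\Pr[\lnot E] \le \Pr[\lnot E \mid F] + \Pr[\lnot F]$ with $E = \B_{l+1}(t_{l+1})$ and $F = \B_l(t_l)$ yields
\[
  \Pr[\lnot \B_k(t_k)] \le \Pr[\lnot \B_0(t_0)] + \sum_{l = 0}^{k - 1} \Pr[\lnot \B_{l + 1}(t_{l + 1}) \mid \B_l(t_l)].
\]
The base term is essentially free: since $X_{t_0} = t_0$ while every vertex of a $t_0$-vertex graph has degree at most $t_0 - 1 < X_{t_0}$, the event $\B_0(t_0)$ holds with probability $1$. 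Each conditional summand is $O(t^{-A-1})$ upon applying \Cref{lem:jump-3} with $A+1$ in place of $A$, which is permitted because that lemma is stated for every fixed positive constant.

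Finally I would control the number of scales. As each jump obeys $w_i \ge 1$ asymptotically, the number of steps satisfies $k \le t_k - t_0 = O(t)$, so the sum contributes at most $O(t) \cdot O(t^{-A-1}) = O(t^{-A})$, and the base term vanishes, giving $\Pr[\lnot \B_k(t_k)] = O(t^{-A})$. Assembling the two pieces produces $\Pr[\Delta(G_t) \ge (1+\alpha)\,t^p \ln^{2-p^2}{t}] \le \Pr[\lnot \B_k(t_k)] = O(t^{-A})$, as claimed. I expect the genuinely hard analytic work to already reside in \Cref{lem:start}--\Cref{lem:jump-3}; the only real subtlety remaining here is the probabilistic bookkeeping — namely verifying that the per-scale failure probabilities accumulate over at most $k = O(t)$ conditioned steps without eroding the exponent, which is exactly why one must invoke the lemma at level $A+1$ — together with the mild geometric check that the discretized endpoint $t_k$ overshoots the true $t$ only by a factor $1 + o(1)$, so that the polylogarithmic envelope $\ln^{2-p^2}{t}$ is preserved in the final bound.
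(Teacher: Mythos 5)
Your proposal is correct and follows essentially the same route as the paper's proof: reduce the target event to $\lnot\B_k(t_k)$ via monotonicity of degrees and the envelope bound $X_{t_k} \le (1+\alpha)t^p\ln^{2-p^2}{t}$ from \Cref{lem:upper}, then telescope $\Pr[\lnot\B_k(t_k)]$ over the $k \le t$ scales using \Cref{lem:jump-3} (applied with exponent $A+1$ to survive the union over scales, a bookkeeping point the paper leaves implicit). The only minor difference is the base case: you note that $\B_0(t_0)$ holds deterministically since every degree in a simple $t_0$-vertex graph is at most $t_0 - 1 < X_{t_0}$, whereas the paper cites \Cref{thm:upper-bound-early} for $\Pr[\lnot\B_0(t_0)] = O(t^{-A})$ --- both are valid, and indeed the paper's own proof of \Cref{thm:upper-bound-early} records the same deterministic observation.
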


\begin{proof}
  From \Cref{lem:upper} we know that $X_{t_k} \le (1 + \alpha) t^p \ln^{2 - p^2}{t}$ holds asymptotically. It follows that in this case
  \begin{align*}
    \Pr\!\left[\Delta(G_t) \ge (1 + \alpha) t^p \ln^{2 - p^2}{t}\right]
    & \le \Pr\!\left[\Delta(G_t) \ge X_{t_k}\right]
      \le \Pr\!\left[\lnot\B_k(t_k)\right] \\
    & \le \sum_{l = 0}^{k - 1} \Pr\!\left[\lnot\B_{l + 1}(t_{l + 1}) | \B_l(t_l)\right] + \Pr\!\left[\lnot\B_0(t_0)\right].
  \end{align*}
  
  Now, from \Cref{thm:upper-bound-early} and \Cref{lem:jump-3} we know that both $\Pr\!\left[\lnot\B_0(t_0)\right] = O(t^{-A})$ and $\Pr[\lnot\B_{l + 1}(t_l) | \B_l(t_l)] = O(t^{-A})$ for any $A > 0$, respectively.
  Putting this all together with the fact that asymptotically as $t \to \infty$ it holds that $k \le t$ we obtain the final result.
\end{proof}

\subsection{Lower bound}
\label{sec:maximum-lower}

Here we proceed analogously to the case of the upper bound for early vertices.
We provide an appropriate Chernoff-type bound for the degree of a given vertex with respect to some deterministic sequence. Then we again use a special sequence, which has the desired rate of growth and serves as a lower bound on $\deg_t(s)$.
Note that we don't need to extend our analysis for the late vertices since a lower bound for the degree of any vertex $s$ at time $t$ is also a lower bound for the minimum degree of $G_t$.

Now, we note that if we start the whole process from a non-empty graph, then there exists $s \in [1, t_0]$ such that $\deg_{t_0}(s) \ge 1$.
Moreover, even if the starting graph is empty, but $r > 0$, then with high probability there exists a vertex with positive degree, as the probability of adding another isolated vertex to an empty graph on $t$ vertices is at most $(1 - \frac{r}{t})^t \le \exp(-r)$, so within first $\frac{A}{r} \ln{t}$ vertices for any $A > 0$ we have a non-isolated vertex with probability at least $1 - O(t^{-A})$.
Of course, if we start from an empty graph and $r = 0$, then for any $p$ there is no edge in the duplication process. However, in this case it trivially follows that $\Delta(G_t) = 0$, so we omit this case in further analysis.

That said, let us now proceed with the aforementioned Chernoff-type lower bound for the degree of a given early vertex:
\begin{lemma}
  \label{lem:chernoff-lower}
  Let $1 \le s \le \tau \le t$ such that $\tau \ge \phi = \ln^{1 + p}{t}$. Then for any $A > 0$ it is true that
  \begin{align*}
       \Pr\!\left[\deg_{\tau + h}(s) - \deg_\tau(s) \le \frac{2 A (1 - \delta)}{\delta^2} \ln{t}~\Bigg\vert \deg_\tau(s) \ge X_\tau\right] = O(t^{-A}),
  \end{align*}
  with $\varepsilon = \delta = \frac{p (1 - p)}{8 \ln{\tau}}$ and $h = \frac{2 A \tau \ln{t}}{\delta^2 (1 - \varepsilon) (p X_\tau + r)}$.
\end{lemma}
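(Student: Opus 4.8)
The plan is to mirror the structure of the upper-bound Chernoff lemma (\Cref{col:chernoff-upper}), but working with the left tail rather than the right tail, and conditioning on the event that the degree is at least $X_\tau$ instead of at most $X_\tau$. The core idea is that if $\deg_\tau(s) \ge X_\tau$, then each increment $I_{\tau + i}$ of the degree process over the window $[\tau, \tau + h)$ has success probability that is bounded below by (roughly) $\frac{p X_\tau + r}{\tau}$, so the accumulated increase stochastically dominates a sum of independent Bernoulli variables, to which a left-tail Chernoff bound applies.

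First I would establish a lower-bound analogue of \Cref{col:simple-tail}: I need to show that with high probability the quantity $\frac{p \deg_{\tau + j}(s) + r}{\tau + j}$ does not drop much below $(1 - \varepsilon) \frac{p X_\tau + r}{\tau}$ over the whole window, i.e. that the degree of $s$ does not shrink (it cannot, since degrees are nondecreasing) but more to the point that the denominator growth $\tau + j$ is controlled. Since degrees only increase, we have $\deg_{\tau + j}(s) \ge \deg_\tau(s) \ge X_\tau$ deterministically on the conditioning event, so the numerator is already bounded below; the only loss comes from $\tau + j \le \tau + h$ in the denominator, which contributes a factor $\frac{\tau}{\tau + h} \ge 1 - \frac{h}{\tau}$. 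I would choose $\varepsilon = \delta = \frac{p(1-p)}{8 \ln \tau}$ and check that $h = \frac{2 A \tau \ln t}{\delta^2 (1 - \varepsilon)(p X_\tau + r)}$ satisfies $\frac{h}{\tau} = O\!\left(\frac{\ln t \cdot \ln^2 \tau}{X_\tau}\right) = o(\varepsilon)$, using $X_\tau \ge \max\{\ln^{1+p} t, \tau^p\}$ and $\max\{a,b\} \ge a^\gamma b^{1-\gamma}$ exactly as in the proofs above; this guarantees that on the window the per-step success probability stays at least $(1 - \varepsilon) \frac{p X_\tau + r}{\tau}$.

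With this domination in place, I would define $I^*_{\tau + i} \sim Be\!\left((1 - \varepsilon) \frac{p X_\tau + r}{\tau}\right)$ independent, so that $I_{\tau + i}$ stochastically dominates $I^*_{\tau + i}$, and then invoke the standard right-tail-free \emph{lower} Chernoff bound, namely $\Pr\!\left[\sum_{i=0}^{h-1} I^*_{\tau + i} \le (1 - \delta)\,\E \sum_{i=0}^{h-1} I^*_{\tau + i}\right] \le \exp\!\left(- \frac{\delta^2}{2} \E \sum_{i=0}^{h-1} I^*_{\tau + i}\right)$. The expectation is $\E \sum I^*_{\tau + i} = h (1 - \varepsilon) \frac{p X_\tau + r}{\tau}$, and substituting $h = \frac{2 A \tau \ln t}{\delta^2 (1 - \varepsilon)(p X_\tau + r)}$ makes this expectation equal to $\frac{2 A \ln t}{\delta^2}$, whence the exponent becomes $-A \ln t$ and the bound is $O(t^{-A})$. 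The deviation threshold $(1 - \delta)\,\E \sum I^* = \frac{2 A (1 - \delta)}{\delta^2} \ln t$ is precisely the quantity appearing in the lemma statement, so the two match up.

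The main obstacle I anticipate is the bookkeeping around the conditioning and the stochastic domination direction. Unlike the upper-bound case, where conditioning on $\deg_\tau(s) \le X_\tau$ combined with the bad event $\mathcal{D}_\varepsilon$ was needed to cap the success probabilities from above, here the lower bound on $\deg_{\tau + j}(s)$ is automatic from monotonicity of degrees, which actually simplifies matters—but I must be careful that the $\frac{r}{\tau+j}$ term and the subtracted $\frac{r \deg_{\tau+j}(s)}{(\tau+j)^2}$ correction (from the exact expression $q_{\tau+i} = \frac{p \deg_{\tau+i}(s) + r}{\tau+i} - \frac{r \deg_{\tau+i}(s)}{(\tau+i)^2}$) do not erode the lower bound; since $\deg_{\tau+j}(s) \le \tau + j$ always, this correction is at most $\frac{r}{\tau+j}$, a lower-order term absorbed into the $(1-\varepsilon)$ slack. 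The delicate verification is confirming that the chosen $\delta = \frac{p(1-p)}{8\ln\tau}$ is small enough that $h \le \tau$ (so the window stays within the process) yet the expected count $\frac{2A\ln t}{\delta^2}$ grows only polylogarithmically relative to $X_\tau$, keeping $h$ genuinely sublinear in $\tau$; this is the analogue of the size constraint checked in \Cref{col:chernoff-upper}, and I expect it to follow from the same two inequalities used throughout.
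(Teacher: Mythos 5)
Your proposal is correct and follows essentially the same route as the paper's proof: monotonicity of degrees makes the lower bound $q_{\tau+i} \ge (1-\varepsilon)\frac{p X_\tau + r}{\tau}$ deterministic on the conditioning event (so no analogue of the bad-event lemma is needed), after which stochastic domination by i.i.d.\ Bernoulli variables and the left-tail Chernoff bound with exponent $-\frac{\delta^2}{2}\E\bigl[\sum_i I^*_{\tau+i}\bigr]$ yield the claim upon substituting $h$. Your absorption of the $-\frac{r \deg_{\tau+j}(s)}{(\tau+j)^2}$ correction into the $(1-\varepsilon)$ slack differs only cosmetically from the paper's factorization using $\varepsilon^2 \ge \frac{r}{p\tau}$, and both verifications of $h \le \varepsilon \tau$ rely on the same pair of inequalities.
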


\begin{proof}
 Let us recall (as in the proof of \Cref{lem:simple-growth}) that for $i = 0, 1, \ldots, h - 1$ we have $\deg_{\tau + i + 1}(s) = \deg_{\tau + i}(s) + I_{\tau + i}$ where $I_{\tau + i} \sim Be\left(q_{\tau + i}\right)$ for $q_{\tau + i} = \frac{p \deg_{\tau + i}(s) + r}{\tau + i} - \frac{r \deg_{\tau + i}(s)}{(\tau + i)^2}$.
  Also clearly $\deg_{\tau + i}(s) \ge \deg_\tau(s)$ for any $i = 0, 1, \ldots, h$, so we have
  \begin{align*}
        q_{\tau + i} = & \frac{p \deg_{\tau + i}(s) \left(1 - \frac{r}{p (\tau + i)}\right) + r}{\tau + i}
            \ge \frac{p \deg_{\tau}(s) \left(1 - \frac{r}{p \tau}\right) + r}{\tau + h} \\
        & \ge \frac{p X_\tau \left(1 - \varepsilon^2\right) + r}{\tau (1 + \varepsilon)} \ge (1 - \varepsilon) \frac{p X_\tau + r}{\tau},
  \end{align*}
  since for $\varepsilon = \frac{p (1 - p)}{8 \ln{\tau}}$ it holds that $h \le \varepsilon t$ and $\varepsilon^2 \ge \frac{r}{p \tau}$.
  Therefore for any $i = 0, 1, \ldots, h - 1$ we know that $I_{\tau + i}$ stochastically dominates $I^*_{\tau + i} \sim Be\left((1 - \varepsilon) \frac{p X_\tau + r}{\tau}\right)$.

  As in the proof of the upper bound, the new variables are both Bernoulli and independent. So this time we can use the right tail Chernoff bound for binomial setting from \cite{frieze-introduction} (see Corollary 21.7) which states that for any $\delta \in (0, 1)$
  \begin{align*}
      \Pr\!\left[\sum_{i = 0}^{h - 1} I^*_{\tau + i} \le (1 - \delta)\,\E\left[\sum_{i = 0}^{h - 1} I^*_{\tau + i}\right]\right] & \le \exp\!\left(- \frac{\delta^2}{2} \E\left[\sum_{i = 0}^{h - 1} I^*_{\tau + i}\right]\right)
  \end{align*}
  and therefore
  \begin{align*}
      \Pr\!\left[\deg_{\tau + h}(s)\!\le\!\deg_\tau(s) + (1 - \delta) (1 - \varepsilon)\,\frac{h (p X_\tau + r)}{\tau}\right]\!\le\!\exp\!\left(- \frac{h \delta^2 (1 - \varepsilon) (p X_\tau + r)}{2 \tau}\right)
  \end{align*}
  as clearly $\Pr\!\left[\deg_{\tau + h}(s) - \deg_\tau(s) \le k\right] = \Pr\!\left[\sum_{i = 0}^{h - 1} I_{\tau + i} \le k\right] \le \Pr\!\left[\sum_{i = 0}^{h - 1} I^*_{\tau + i} \le k\right]$ for any $k$, due to the stochastic dominance.
    
  To finish the proof it is sufficient to see that $h = \frac{2 A \tau \ln{t}}{\delta^2 (1 - \varepsilon) (p X_\tau + r)}$ gives the required $O(t^{-A})$ bound in the last equation.
\end{proof}

In the following, we again use sequences $(t_i)_{i = 1}^k$ and $(X_{t_i})_{i = 1}^k$ from \Cref{def:sequences}.
Let us also define $\C_i(s) = [\deg_{t_i}(s) > X_{t_i} - \phi + 1]$ for a fixed $s \le t_i$. Now we are in the position to proceed with the main theorem of this section:
\begin{theorem}
  \label{thm:lower-bound-early}
  For $G_t \sim \DD(t, p, r)$ with $0 < p < 1$ there exists $s$ such that it holds asymptotically that
  \begin{align*}
    \Pr&\left[\deg_t(s) < (1 - \alpha) t^p\right] = O(t^{-A})
  \end{align*}
  for any constants $\alpha, A > 0$.
\end{theorem}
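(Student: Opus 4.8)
The plan is to mirror the structure of the upper-bound argument for early vertices (\Cref{thm:upper-bound-early}), but now driving the sequence $X_{t_i}$ from below rather than above. First I would fix a single early vertex $s$ with $\deg_{t_0}(s) \ge 1$ — guaranteed to exist by the discussion preceding \Cref{lem:chernoff-lower} (either because $G_{t_0}$ is non-empty, or because with probability $1 - O(t^{-A})$ some vertex among the first $O(\ln t)$ becomes non-isolated when $r > 0$). The key idea is to run \Cref{def:sequences} with parameters matched to the \emph{lower} Chernoff estimate: namely $\phi = \ln^{1+p} t$, $\beta_i = p - \frac{p(1-p)}{4\ln t_i}$, and $w_i = \frac{2(A+1)\,t_i \ln t}{\delta^2 (1-\varepsilon)(p X_{t_i} + r)}$ with $\varepsilon = \delta = \frac{p(1-p)}{8\ln t_i}$. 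This choice of $\beta_i$ satisfies the hypothesis of \Cref{lem:lower-2}, so we obtain $X_{t_i} \ge t_i^p$ for all $i$, which will ultimately deliver the target growth rate $t^p$.

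Next I would set up the inductive events $\C_i(s) = [\deg_{t_i}(s) > X_{t_i} - \phi + 1]$ already defined in the excerpt, and show $\Pr[\lnot \C_{i+1}(s) \mid \C_i(s)] = O(t^{-A-1})$. The base case $\C_0(s)$ holds because $\deg_{t_0}(s) \ge 1 = t_0 - \phi + 1 = X_{t_0} - \phi + 1$ when $\phi = t_0$. For the inductive step I would invoke \Cref{lem:chernoff-lower} with $\tau = t_i$ and $h = w_i$: conditioned on $\deg_{t_i}(s) > X_{t_i} - \phi + 1 \ge X_{t_i}(1 - o(1))$, the degree increment $\deg_{t_{i+1}}(s) - \deg_{t_i}(s)$ exceeds $\frac{2(A+1)(1-\delta)}{\delta^2}\ln t$ with probability $1 - O(t^{-A-1})$. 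The arithmetic to verify is that this guaranteed increment is at least $X_{t_{i+1}} - X_{t_i} = \beta_i \frac{w_i X_{t_i}}{t_i}$, i.e. that $\frac{\beta_i X_{t_i}}{(1-\delta)(1-\varepsilon)(p X_{t_i} + r)} \le 1$ asymptotically — the exact mirror of the inequality chain closing the proof of \Cref{thm:upper-bound-early}, now with the sign of the $\delta,\varepsilon$ corrections reversed so that $\beta_i = p - \Theta(1/\ln t_i)$ sits just below the denominator $p + \Theta(1/\ln t_i)$. A telescoping union bound over $i = 0, \ldots, k-1$ (with $k \le t$ since $w_i \ge 1$) then gives $\Pr[\deg_t(s) \le X_{t_k} - \phi + 1] = O(t^{-A})$.

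Finally I would convert this into the stated bound. Since $X_{t_k} \ge t_k^p \ge t^p$ by \Cref{lem:lower-2} and $\phi = \ln^{1+p} t = o(t^p)$, we have $X_{t_k} - \phi + 1 \ge (1 - \alpha) t^p$ for any constant $\alpha > 0$ once $t$ is large, so $\Pr[\deg_t(s) < (1-\alpha)t^p] \le \Pr[\deg_t(s) \le X_{t_k} - \phi + 1] = O(t^{-A})$. Because the maximum degree dominates the degree of this particular $s$, this simultaneously furnishes the matching lower bound on $\Delta(G_t)$.

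The main obstacle I anticipate is the one-vertex lower tail being genuinely delicate: unlike the upper bound, where I could freely inflate $\beta_i$ above $p$ and absorb slack, here I must keep $\beta_i$ below $p$ while still forcing $X_{t_i} \ge t_i^p$, so the $-\frac{p(1-p)}{4\ln t_i}$ correction in $\beta_i$ must be exactly calibrated against the Taylor second-order term in \Cref{lem:lower-2} — there is very little room to spare. A secondary subtlety is that the conditioning event $\C_i(s)$ involves $X_{t_i} - \phi + 1$ rather than $X_{t_i}$ itself, so I must confirm that the $\phi$ deficit stays negligible relative to $X_{t_i}$ throughout the induction (it does, since $X_{t_i} \ge \phi$ and $\phi/X_{t_i} = o(1)$), ensuring the conditional lower bound $\deg_{t_i}(s) \ge X_{t_i}(1 - o(1))$ fed into \Cref{lem:chernoff-lower} remains strong enough to close the recursion.
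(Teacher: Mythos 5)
Your proposal is, step for step, the paper's own proof: the same parameters $\phi = \ln^{1+p}{t}$, $\beta_i = p - \frac{p(1-p)}{4\ln{t_i}}$, $w_i = \frac{2(A+1)\,t_i\ln{t}}{\delta^2(1-\varepsilon)(pX_{t_i}+r)}$ with $\delta = \varepsilon = \frac{p(1-p)}{8\ln{t_i}}$, the same events $\C_i(s)$, the same inductive step via \Cref{lem:chernoff-lower} reduced to checking $\frac{\beta_i X_{t_i}}{(1-\delta)(1-\varepsilon)(pX_{t_i}+r)} \le 1$, the same telescoping union bound, and the same conversion to $(1-\alpha)t^p$ via \Cref{lem:lower-2}. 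As a reconstruction of the paper's argument it is accurate.

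However, the one point where you supply a justification that the paper leaves implicit is exactly where your reasoning fails. You claim the conditioning $\deg_{t_i}(s) > X_{t_i} - \phi + 1$ can be fed into \Cref{lem:chernoff-lower} because ``$X_{t_i} \ge \phi$ and $\phi/X_{t_i} = o(1)$,'' so that $\deg_{t_i}(s) \ge X_{t_i}(1-o(1))$. At the base of the induction this is false: by \Cref{def:sequences}, $X_{t_0} = t_0 = \phi$, hence $X_{t_0} - \phi + 1 = 1$, and $\C_0(s)$ guarantees only $\deg_{t_0}(s) \ge 1$ --- a factor $\ln^{1+p}{t}$ short of the hypothesis $\deg_{t_0}(s) \ge X_{t_0}$ of \Cref{lem:chernoff-lower}; moreover $\phi/X_{t_i}$ remains $1-o(1)$ for a long initial stretch of the induction, since $X_{t_i} \approx \phi\,(t_i/t_0)^p$ grows away from $\phi$ only slowly. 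The gap is not cosmetic. Conditioned on $\deg_{t_0}(s) = O(1)$, each of the $w_0$ steps adds an edge to $s$ with probability $O(1/t_0)$, so the expected increment over the first block is $O(w_0/t_0) = O\bigl((\ln\ln{t})^2/\ln^{p}{t}\bigr) = o(1)$, whereas $\C_1(s)$ demands an increment of $X_{t_1}-X_{t_0} = \beta_0 w_0 = \Theta\bigl(\ln{t}\,(\ln\ln{t})^2\bigr) \to \infty$; under that conditioning $\Pr[\lnot\C_1(s)\mid\C_0(s)]$ is $1-o(1)$, not $O(t^{-A-1})$. In fairness, the paper performs the same substitution silently (it applies \Cref{lem:chernoff-lower}, stated under $\deg_{t_i}(s)\ge X_{t_i}$, to the weaker event $\C_i(s)$), so you have faithfully reproduced the published argument including its most fragile step --- but your explicit patch for that step does not hold, and closing it genuinely requires either a much stronger base case than degree $1$ or a re-derivation of the Chernoff step under the weaker conditioning. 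A secondary shared slip: the inequality $\Pr[\deg_t(s)\le X_{t_k}-\phi+1] \le \Pr[\deg_{t_k}(s)\le X_{t_k}-\phi+1]$ points the wrong way, since degrees are nondecreasing in time and $t \le t_k$; the clean fix is to read off the bound at $t_{k-1} < t$ and use $t_{k-1} \ge t(1-o(1))$.
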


\begin{proof}
    Let us use $\phi = \ln^{1 + p}{t}$, $\beta_i = p - \frac{p (1 - p)}{4 \ln{t_i}}$ and $w_i = \frac{2 (A + 1) t_i \ln{t}}{\delta^2 (1 - \varepsilon) (p X_{t_i} + r)}$ with $\delta = \varepsilon = \frac{p (1 - p)}{8 \ln{t_i}}$.

    Suppose that $\C_i(s)$ holds. Then we can apply \Cref{lem:chernoff-lower} with $\tau = t_i$ and $h = w_i$:
    \begin{align*}
     \Pr&[\lnot \C_{i + 1}(s) | \C_i(s)] = \Pr[\deg_{t_{i + 1}}(s) \le X_{t_{i + 1}}~|~\deg_{t_i}(s) > X_{t_i} - \phi + 1] \\
       & \le \Pr[\deg_{t_{i + 1}}(s) - \deg_{t_i}(s) \le X_{t_{i + 1}} - X_{t_i}~|~\deg_{t_i}(s) > X_{t_i} - \phi + 1] \\
       & = \Pr\!\left[\deg_{t_{i + 1}}(s) - \deg_{t_i}(s) \le \beta_i \frac{w_i X_{t_i}}{t_i}~\Big\vert \deg_{t_i}(s) > X_{t_i} - \phi + 1\right] \\
       & \le \Pr\!\left[\deg_{t_{i + 1}}(s) - \deg_{t_i}(s) \le \frac{2 (A + 1) (1 - \delta)}{\delta^2} \ln{t}~\Bigg\vert \deg_{t_i}(s) > X_{t_i} - \phi + 1\right] \\
       & = O(t^{-A - 1}),
    \end{align*}
    where we used the fact that asymptotically as $t \to \infty$ it holds that
    \begin{align*}
        \frac{\beta_i X_{t_i}}{(1 - \delta) (1 - \varepsilon) (p X_{t_i} + r)}
            & \le \frac{p - \frac{p (1 - p)}{4 \ln{t_i}}}{p (1 - \delta - \varepsilon)} = 1.
    \end{align*}
    
    Next, we get
    \begin{align*}
      \Pr&[\deg_t(s) \le X_{t_k} - \phi + 1] \le \Pr[\deg_{t_k}(s) \le X_{t_k} - \phi + 1] = \Pr[\lnot \C_k(s)] \\
      & \le \sum_{i = 0}^{k - 1} \Pr[\lnot \C_{i + 1}(s) | \C_i(s)] + \Pr[\lnot \C_0(s)] = \sum_{i = 0}^{k - 1} O(t^{-A - 1}) = O(t^{-A}),
    \end{align*}
    since asymptotically it is true that $w_i \ge 1$ for all $i = 0, \ldots, k$, and therefore $k \le t$.

    To complete the proof it is sufficient to note that $t \le t_k \le (1 + \alpha) t_{k - 1} \le (1 + \alpha) t$ for any constant $\alpha > 0$ and thus $X_{t_k} \le (1 + \alpha) t^p$.
\end{proof}

\section{Average degree}
\label{sec:average}

Now let us proceed to the results on the average degree of $G_t$ defined as
\begin{align*}
    D(G_t)= \frac{1}{t} \sum_{s=1}^t \deg_t(s).
\end{align*}
    
First, we recall from \cite[Theorem 9(iii)]{turowski-expected} that for any $\tau = t_0, \ldots, t - 1$ it holds asymptotically (i.e. when $t_0 \to \infty$) that
\begin{align*}
\E[\deg_{\tau}(\tau)] & =
\begin{cases}
    D(G_{t_0}) \frac{p \Gamma(t_0) \Gamma(t_0 + 1)}{\Gamma(t_0 + c_3) \Gamma(t_0 + c_4)} \tau^{2 p - 1} (1 + o(1)) & \text{if $p \le \frac{1}{2}$, $r = 0$}, \\
    r (1 + o(1)) & \text{if $p = 0$, $r > 0$}, \\
    \left(\frac{r (1 - p)}{p (1 - 2 p)} - \frac{r}{p}\right) (1 + o(1)) & \text{if $0 < p < \frac{1}{2}$, $r > 0$}, \\
    r \log{\tau}\, (1 + o(1)) & \text{if $p = \frac{1}{2}$, $r > 0$}, \\
    \left(D(G_{t_0}) + \frac{2 r t_0}{t_0^2 + 2 p t_0 - 2 r}\, \Hypergeometric{3}{2}{t_0 + 1, t_0 + 1, 1}{t_0 + c_3 + 1, t_0 + c_4 + 1}{1}\right) \\
        \qquad \frac{p \Gamma(t_0) \Gamma(t_0 + 1)}{\Gamma(t_0 + c_3) \Gamma(t_0 + c_4)} \tau^{2 p - 1} (1 + o(1)) & \text{if $p > \frac{1}{2}$},
\end{cases}
\end{align*}
where $D(G_{t_0})$ is the average degree of the initial graph $G_{t_0}$ and
\begin{equation*}
  \Hypergeometric{3}{2}{a_1, a_2, a_3}{b_1, b_2}{z}
    = \sum_{l = 0}^\infty \frac{(a_1)_l (a_2)_l (a_3)_l}{(b_1)_l (b_2)_l} \frac{z^l}{l!}
\end{equation*}
is the generalized hypergeometric function with $(a)_l = a (a + 1) \ldots (a + l - 1)$, $(a)_0 = 1$
the rising factorial (see \cite{abramowitz} for details).

In short, if we omit constant factors, there are three regimes of growth: constant, $\ln{t}$, and $t^{2 p - 1}$.
We need to find the proper high probability bound for each case separately, however it turns out that the proofs are very similar.

\subsection{Upper bound}

Now we may proceed to the main result of this section: the upper bound for the average degree of $G_t$. It turns out that there are exactly two regimes with somewhat different behavior:
\begin{theorem}
\label{thm:ub_avg_deg}
Asymptotically for $G_t \sim \DD(t, p, r)$ it holds that
\begin{align*}
  \Pr&[D(G_t) \ge A\, C\, \ln{t}] = O(t^{-A})           & \text{for $p \le \frac{1}{2}$,} \\
  \Pr&[D(G_t) \ge C\, t^{2 p - 1}] = O(t^{-A})          & \text{for $p > \frac{1}{2}$.}
\end{align*}
for some fixed constant $C > 0$ and any $A > 0$.
\end{theorem}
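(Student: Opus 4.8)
The goal is to bound the total degree sum $\sum_{s=1}^t \deg_t(s)$ with high probability, since $D(G_t)$ is just this sum divided by $t$. The plan is to write the sum of degrees as twice the number of edges and track how edges accumulate over time. At each step $i \to i+1$ the newly arrived vertex $i+1$ contributes $\deg_{i+1}(i+1)$ new edges, so the total edge count equals $\sum_{i=t_0}^{t-1} \deg_{i+1}(i+1)$ plus the edges of the initial graph $G_{t_0}$. The strategy I would take is to obtain, for each arriving vertex, a stochastic-dominance bound on $\deg_{i+1}(i+1)$ by a sum of independent Bernoulli variables (exactly as in \Cref{lem:start}), and then combine all these increments into one large sum of independent random variables to which a Chernoff bound can be applied.

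\textbf{Key steps in order.}

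First I would express $D(G_t) = \frac{2}{t}\big(|E(G_{t_0})| + \sum_{i=t_0}^{t-1} \deg_{i+1}(i+1)\big)$ and note that the initial term is $O(1/t)$ and negligible. Second, for the degree of a newly arrived vertex I would bound $\deg_{i+1}(i+1)$ stochastically by $Bin(\deg_i(u),p) + Bin(i, r/i)$ where $u = \parent{i+1}$; conditioning on the graph $G_i$, the expected contribution is $p \cdot \frac{\deg_i(u)}{1} + r$ in expectation over the uniform parent choice, which is $\frac{p}{i}\sum_s \deg_i(s) + r = p\, D(G_i) + r$. Third, I would set up a recurrence in expectation for $\E[D(G_i)]$ matching the known asymptotics from \Cref{thm:exp_avg_deg}, namely that $\E[D(G_i)]$ is $\Theta(1)$, $\Theta(\ln i)$, or $\Theta(i^{2p-1})$ according to the regime. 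The main work is then concentration: because the increments $\deg_{i+1}(i+1)$ are not independent (each depends on the evolving degrees), I would pass to independent dominating variables, treating the contribution from the $r$-edges as a genuinely independent $Bin(i, r/i)$ part and bounding the $p$-edge part by first conditioning on a high-probability upper bound for all earlier degrees --- precisely the events $\B_l$ from the maximum-degree analysis, or a coarser envelope sufficient for the average.

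\textbf{Handling the two regimes and the obstacle.}

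For $p \le \frac{1}{2}$ the dominant contribution comes from the $r$-edges, and the sum $\sum_{i=t_0}^{t-1} Bin(i, r/i)$ has mean $\Theta(t)$ with each summand being a sum of independent Bernoullis; a direct Chernoff bound on these $\Theta(t)$ independent indicators yields $\sum \deg_{i+1}(i+1) \le A\,C\,t\ln t$ with probability $1 - O(t^{-A})$, giving the $A\,C\,\ln t$ bound after dividing by $t$. For $p > \frac{1}{2}$ the $p$-edge part dominates and grows like $i^{2p-1}$ per vertex, summing to $\Theta(t^{2p-1})$ for the average; here I would use the upper bound $\deg_i(s) \le X_{t_i} = O(t^p \polylog)$ from \Cref{thm:jump} to control the $Bin(\deg_i(u),p)$ contribution, then apply Chernoff to the resulting independent dominating sum. \textbf{I expect the main obstacle to be the dependency structure:} the degree increments form a self-reinforcing process (larger degrees beget larger expected increments), so a naive Chernoff bound does not apply directly. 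The cleanest way around this, which I would pursue, is to condition on the high-probability event that all degrees stay below the envelope $X_{t_i}$ (already established in \Cref{thm:jump}), under which each increment $\deg_{i+1}(i+1)$ is stochastically dominated by a fixed independent Bernoulli sum with a deterministic success probability, decoupling the summands so that the standard Chernoff inequality of \cite{frieze-introduction} applies to the full sum over all $t$ arrivals.
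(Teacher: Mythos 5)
Your setup---writing the edge count as $\sum_i \deg_{i+1}(i+1)$, observing that each increment is conditionally a Bernoulli sum $Bin(\deg_i(u),p)+Bin(i-\deg_i(u),r/i)$, and invoking the known expectations---matches the paper's proof up to the concentration step, but your decoupling device fails there, and this is a genuine gap. You propose to condition on the maximum-degree envelope of \Cref{thm:jump}, i.e.\ $\deg_i(u)\le X_{t_i}=O(i^p\,\polylog(t))$, and then dominate each increment by $Bin(X_{t_i},p)+Bin(i,r/i)$. But the parent $u$ is chosen \emph{uniformly}, so the relevant scale for $\deg_i(u)$ is the average degree $D(G_i)$ (of order $i^{2p-1}$, $\ln i$, or constant), not the maximum degree $i^p$; replacing the parent's degree by the max-degree envelope inflates the per-step dominating mean to order $i^p\,\polylog(t)$, so your dominating sum has mean of order $t^{1+p}\,\polylog(t)$ and yields only $D(G_t)=O(t^{p}\,\polylog(t))$. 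Since $2p-1<p$ for every $p<1$, this falls short of the claimed $C\,t^{2p-1}$ bound for $p>\frac{1}{2}$, and it is nowhere near $A\,C\,\ln t$ for $p\le\frac{1}{2}$; in particular, your claim that the $r$-edges dominate when $p\le\frac{1}{2}$ is true for the actual process but false for your dominating process, whose $p$-part has mean $t^{1+p}\,\polylog(t)\gg t\ln t$. The alternative you mention, ``a coarser envelope sufficient for the average,'' is exactly the statement being proved, so as stated that route is circular.

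The paper sidesteps joint concentration of the sum entirely: for each step $\tau$ it applies a Chernoff bound to the single increment $2\deg_{\tau+1}(\tau+1)$ (conditionally a sum of independent Bernoullis) measured against its \emph{unconditional} mean $\E[\deg_{\tau+1}(\tau+1)]$, which is known from \cite{turowski-expected} to be $O(\tau^{2p-1})$, resp.\ $O(\ln\tau)$ or $O(1)$---this is precisely where the uniform parent choice enters, since that mean is an average-degree quantity, roughly $p\,\E[D(G_\tau)]+r$. A union bound over the $t$ steps then guarantees that every increment is at most $2(1+\delta)\E[\deg_{\tau+1}(\tau+1)]$ for $p>\frac{1}{2}$, or at most $2(A+1)\ln t$ when the mean is only logarithmic or smaller (which is why the factor $A$ appears in the $p\le\frac{1}{2}$ bound), and the deterministic sum of these per-step bounds already has the right order $t^{2p}$, resp.\ $t\ln t$. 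If you want to salvage your plan, replace the max-degree envelope by this per-step conditioning against the known expectation (or set up an induction on an average-degree envelope); as written, the key decoupling step does not give the theorem.
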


\begin{proof}
    For simplicity, we will work with the total number of edges $\tau D(G_\tau)$ instead of $D(G_\tau)$.
    Clearly, for any $\tau = t_0, \ldots, t - 1$ it holds that
    \begin{align*}
        (\tau + 1) D(G_{\tau + 1}) - \tau D(G_\tau) & = 2 \deg_{\tau + 1}(\tau + 1), \\
        \deg_{\tau + 1}(\tau + 1) \sim Bin(\deg_\tau(\parent{\tau + 1}), p) & + Bin(\tau - \deg_\tau(\parent{\tau + 1}), r / \tau).
    \end{align*}
    Therefore, we can use Chernoff bound to obtain for any $\delta \ge 0$
    \begin{align*}
        \Pr&\left[(\tau + 1) D(G_{\tau + 1}) - \tau D(G_\tau) \ge 2\,(1 + \delta)\,\E[\deg_{\tau + 1}(\tau + 1)]\right] \\
              & \leq \exp\!\left(- \frac{2 \delta^2}{2 + \delta} \E[\deg_{\tau + 1}(\tau + 1)]\right).
    \end{align*}

    Now, for $p > \frac{1}{2}$ we know that $\E[\deg_{\tau}(\tau)] \le C^* \tau^{2 p - 1}$ for some constant $C^* > 0$.
    Thus, it is sufficient to set $t_0 = t^{p/3}$ and $\delta = \sqrt{\frac{3 (A + 1) \ln{t}}{2 C^* \tau^{2 p - 1}}} = o(1)$ for all $\tau = t_0, \ldots, t - 1$ to get
    \begin{align*}
        \Pr&\left[(\tau + 1) D(G_{\tau + 1}) - \tau D(G_\tau) \ge 2 (1 + \delta) C^*\,\tau^{2 p - 1}\right] = O(t^{-A - 1}),
    \end{align*}
    and by summing over all $\tau$ that no event from polynomial tails happens we obtain
    \begin{align*}
        \Pr&\left[t D(G_t) \ge C\,t^{2 p}\right] \le \Pr\!\left[t D(G_t) - t_0 D(G_{t_0}) \ge \sum_{i = t_0}^{t - 1} 2 (1 + \delta) C^*\,\tau^{2 p - 1}\right] = O(t^{-A}),
    \end{align*}
    for any constant $C \ge t^{- 2 p} \sum_{i = t_0}^{t - 1} 2 (1 + \delta) C^*\,\tau^{2 p - 1} + t^{- 2 p} t_0 D(G_{t_0})$ -- and such constant indeed exists since it is not hard to verify that the latter sum is finite.

    In all cases $0 < p \le \frac{1}{2}$ it turns out that $\sqrt{\frac{3 (A + 1) \ln{t}}{2 C^* \tau^{2 p - 1}}} \to \infty$.
    However, for $0 < p \le \frac{1}{2}$, $r > 0$ we have $\E[\deg_{\tau}(\tau)] \le C^* \ln{\tau}$ for some constant $C^* > 0$, and we can assume $\delta \to \infty$ such that
    \begin{align*}
        \frac{1 + \delta}{2} \le \frac{\delta^2}{2 + \delta} = \frac{(A + 1) \ln{t}}{2 C^* \ln{\tau}},
    \end{align*}
    so therefore
    \begin{align*}
        \Pr\!\left[(\tau + 1) D(G_{\tau + 1}) - \tau D(G_\tau) \ge 2 (A + 1) \ln{t}\right] = O(t^{-A - 1}), \\
        \Pr\!\left[t D(G_t) \ge A\,C\,t \ln{t}\right] \le \Pr\!\left[t D(G_t) - t_0 D(G_{t_0}) \ge \sum_{i = t_0}^{t - 1} 2 (A + 1) \ln{i}\right] = O(t^{-A}),
    \end{align*}
    for some constant $C \ge 2 + \frac{t_0}{A t \ln{t}} D(G_{t_0})$ when $t_0 = t^{1/3}$.

    Finally, let us study the case $0 < p < \frac{1}{2}$, $r = 0$. Again we know that $\E[\deg_{\tau}(\tau)] \le C^* \tau^{2 p - 1}$ for some constant $C^* > 0$.
    Again, we can assume
    \begin{align*}
        \frac{1 + \delta}{2} \le \frac{\delta^2}{2 + \delta} = \frac{(A + 1) \ln{t}}{2 C^* \tau^{2 p - 1}},
    \end{align*}
    so by a similar reasoning as before we get
    \begin{align*}
        \Pr\!\left[(\tau + 1) D(G_{\tau + 1}) - \tau D(G_\tau) \ge 2 (A + 1) \ln{t}\right] & = O(t^{-A - 1}), \\
        \Pr\!\left[t D(G_t) \ge A\,C\,t \ln{t}\right] \le \Pr\!\left[t D(G_t) - t_0 D(G_{t_0}) \ge \sum_{i = t_0}^{t - 1} 2 (A + 1) \ln{t}\right] & = O(t^{-A}),
    \end{align*}
    for sufficiently large constant $C$ when $t_0 = t^{1/3}$. 
\end{proof}

\subsection{Lower bound}

We now turn our attention to establishing the corresponding lower bound. Note that since $\E[D(G_{t})] = O(\log{t})$ for $p \le \frac{1}{2}$,
the lower polynomial tail is trivial in this range since all smaller values are within the polylogarithmic distance from the mean.
However, we can investigate the case $p > \frac{1}{2}$.

\begin{theorem}
\label{thm:lb_avg_deg}
For $G_t \sim \DD(t, p, r)$ with $p > \frac{1}{2}$ asymptotically it holds that
\begin{align*}
  \Pr\!\left[D(G_{t}) \le C\,t^{2 p - 1}\right] = O(t^{-A}).
\end{align*}
for some fixed constant $C > 0$ and any $A > 0$.
\end{theorem}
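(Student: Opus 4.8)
The plan is to work with the doubled edge count $M_\tau = \tau D(G_\tau) = \sum_{s=1}^\tau \deg_\tau(s)$, for which the recurrence already used in \Cref{thm:ub_avg_deg} gives $M_{\tau+1} = M_\tau + 2\deg_{\tau+1}(\tau+1)$, where $\deg_{\tau+1}(\tau+1)$ stochastically dominates $Bin(\deg_\tau(\parent{\tau+1}), p)$. First I would record the one-step conditional drift: averaging the uniform choice of $\parent{\tau+1}$ over $G_\tau$,
\begin{align*}
  \E[M_{\tau+1} \mid G_\tau]
    = M_\tau + 2\!\left(p D(G_\tau) + r\!\left(1 - \tfrac{D(G_\tau)}{\tau}\right)\right)
    \ge M_\tau\!\left(1 + \tfrac{2p}{\tau}\right),
\end{align*}
the inequality merely dropping the nonnegative $r$-terms (valid since $M_\tau \le \tau^2$). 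Two structural facts drive everything: $M_\tau$ is \emph{nondecreasing}, since edges are never deleted, and for $p > \frac12$ the increments have unboundedly growing mean $\E[\deg_\tau(\tau)] = \Theta(\tau^{2p-1})$.

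The core is a multiplicative bootstrap along a geometric ladder $t_{i+1} = (1+\gamma)t_i$ for a small constant $\gamma > 0$. I would prove, by induction over the rungs, that with probability $1 - O(t^{-A})$ one has $M_{t_i} \ge c\, t_i^{2p}$ for a suitable small constant $c > 0$. Granting $M_{t_i} \ge c\, t_i^{2p}$, monotonicity forces $M_\tau \ge c\, t_i^{2p}$ for \emph{every} $\tau \in [t_i, t_{i+1}]$, so the conditional drift stays uniformly bounded below by $2pc\, t_i^{2p}/\tau \ge 2pc\, t_i^{2p}/t_{i+1}$ throughout the block; this is exactly the point at which the usual difficulty of a fluctuating drift evaporates. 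Hence the block increments conditionally dominate independent Bernoulli-type variables whose cumulative conditional mean over the block is $\Theta(t_i^{2p})$, which for $p > \frac12$ is polynomially large in $t$ once $t_i$ is at least a small power of $t$. A one-sided lower-tail Chernoff/martingale bound, set up through an exponential supermartingale in the conditional means exactly as in \Cref{thm:lower-bound-early}, then yields $M_{t_{i+1}} \ge c\, t_{i+1}^{2p}$ with failure probability $\exp(-\Theta(t_i^{2p})) = O(t^{-A-1})$. A union bound over the $O(\log t)$ rungs propagates the bound up to time $t$, giving $\Pr[M_t \ge C t^{2p}] = 1 - O(t^{-A})$, and the claim follows with $C$ absorbing $c$ and the factor $2$.

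The hard part will be the base case: anchoring the ladder at a scale $t_*$ small enough that $O(\log t)$ rungs still reach $t$, yet large enough to certify the seed $M_{t_*} = \Omega(t_*^{2p})$. Near the bottom of the ladder the conditional expected increments are only $\Theta(\tau^{2p-1})$, which is sub-polynomial, so the lower-tail Chernoff bound by itself cannot certify the seed — the early multiplicative growth genuinely carries a random factor, and this is precisely the asymmetry that makes the lower bound more delicate than the upper bound of \Cref{thm:ub_avg_deg}. I would push the recursion down to a polylogarithmic scale $t_* = \Theta\!\big((\ln t)^{2/p}\big)$, where $t_*^{2p} = \Theta(\ln^4 t) \gg \ln t$ still permits an $O(t^{-A})$ tail, and establish the seed separately: when $r > 0$ the constant per-step rate of creating fresh edges already deposits $\Omega(\tau^{2p})$ edges with high probability at this scale, while in the $r = 0$ case one must combine the guaranteed $\Omega(\tau^p)$-degree vertex from \Cref{thm:lower-bound-early} with the drift recurrence to force the \emph{aggregate} edge count up to $\Omega(t_*^{2p})$. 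Making this seeding step rigorous — rather than the clean upward propagation — is where I expect the real work to lie.
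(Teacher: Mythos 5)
Your ladder argument has a genuine quantitative gap at the inductive step: it cannot maintain $M_{t_i} \ge c\,t_i^{2p}$ with a fixed constant $c$ for any constant rung ratio $\gamma>0$. Grant $M_{t_i}\ge c\,t_i^{2p}$ and freeze this bound through the block, as you propose. Then the total conditional mean of the block increments that you can certify is at most $\sum_{\tau=t_i}^{t_{i+1}-1} 2p\,c\,t_i^{2p}/\tau \approx 2p\,c\,t_i^{2p}\ln(1+\gamma)$, so even with a lossless concentration bound ($\delta=0$) the rung ends with the certificate $M_{t_{i+1}} \ge c\,t_i^{2p}\bigl(1+2p\ln(1+\gamma)\bigr)$, whereas the induction requires $c\,t_{i+1}^{2p} = c\,t_i^{2p}(1+\gamma)^{2p}$. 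Since $2p>1$, Bernoulli's inequality gives $(1+\gamma)^{2p}\ge 1+2p\gamma > 1+2p\ln(1+\gamma)$: the certified gain falls strictly short of the required gain at every rung, for every $\gamma>0$ and every $\delta\ge 0$. The relative deficit is $\Theta(\gamma^2)$ per rung and there are $\Theta(\ln t/\gamma)$ rungs, so the scheme can only certify $M_t = \Omega\bigl(t^{2p-\Theta(\gamma)}\bigr)$ -- off from the theorem by a polynomial factor. This is exactly the price of freezing a drift proportional to the current value when the target $\tau^{2p}$ has a convex exponent; the analogous freezing in the paper's max-degree arguments survives only because there the target $\tau^{p}$ has concave exponent $p<1$, which creates second-order slack (this is precisely what the condition $\beta_i \ge p - \frac{p(1-p)}{4\ln t_i}$ in \Cref{lem:lower-2} encodes). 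Repairing your scheme forces $\gamma=o(1/\ln t)$, i.e.\ sub-constant rungs, after which the block means $\Theta(\gamma t_i^{2p})$ must be rebalanced against $\delta^{-2}\ln t$ and the union bound -- a substantive redesign, not a patch.

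The seed is also broken as sketched, not merely ``hard'': at $t_*=\Theta((\ln t)^{2/p})$ you need $M_{t_*}=\Omega(t_*^{2p})=\Omega(\ln^4 t)$, but for $p>\frac12$ the fresh $r$-edges supply only $\Theta(r\,t_*)=\Theta(\ln^{2/p}t)$ edges with $2/p<4$, and in the $r=0$ case the guaranteed hub from \Cref{thm:lower-bound-early} supplies only $\Theta(t_*^{p})=\Theta(\ln^2 t)$ edges; both miss the target by polylogarithmic factors. (A further soft spot: the increments are mixtures over the parent's degree, not Bernoulli-type variables -- for $r=0$ and $p$ slightly above $\frac12$ a single increment equals $0$ with probability $1-o(1)$, since almost all vertices are then isolated -- so the lower tail must come from a Freedman-type martingale bound using a whp bound on $\Delta(G_\tau)$, which your supermartingale remark gestures at but which also weakens the per-rung failure exponent from $t_i^{2p}$ to roughly $t_i^{p}$ up to polylogs.) For contrast, the paper's proof avoids both difficulties entirely: it starts at $t_0=t^{p/3}$, lower-bounds each one-step increment by a Chernoff bound around the \emph{known} unconditional mean $\E[\deg_{\tau+1}(\tau+1)]\ge C^*\tau^{2p-1}$ imported from \cite{turowski-expected}, takes a union bound over the at most $t$ steps, and sums: $\sum_{\tau\ge t_0}\tau^{2p-1}=\Theta(t^{2p})$ already gives the theorem, with no induction and no seed, since for a lower bound $t_0 D(G_{t_0})\ge 0$ suffices. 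Your bootstrap needs a seed precisely because it certifies growth proportional to the current certified size; anchoring the growth to an externally known per-step mean, as the paper does, removes that need.
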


\begin{proof}
    Similarly as before, we invoke the appropriate Chernoff bound for $\delta \in (0, 1)$
    \begin{align*}
        \Pr&\left[(\tau + 1) D(G_{\tau + 1}) - \tau D(G_\tau) \le 2\,(1 - \delta)\,\E[\deg_{\tau + 1}(\tau + 1)]\right] \\
              & \leq \exp\!\left(-\delta^2 \E[\deg_{\tau + 1}(\tau + 1)]\right).
    \end{align*}

    For $p > \frac{1}{2}$ it is true that $\E[\deg_{\tau}(\tau)] \ge C^* \tau^{2 p - 1}$ for some constant $C^* > 0$.
    Thus, it is sufficient to set $t_0 = t^{p/3}$ and $\delta = \sqrt{\frac{(A + 1) \ln{t}}{C^* \tau^{2 p - 1}}} \le \frac{1}{2}$ for all $\tau = t_0, \ldots, t - 1$ to get
    \begin{align*}
        \Pr&\left[(\tau + 1) D(G_{\tau + 1}) - \tau D(G_\tau) \le 2 (1 - \delta)\,C^* \tau^{2 p - 1}\right] = O(t^{-A - 1}),
    \end{align*}
    which leads us to
    \begin{align*}
        \Pr&\left[t D(G_t) - t_0 D(G_{t_0}) \le C\,t^{2 p}\right] \\
            & \le \Pr\!\left[t D(G_t) - t_0 D(G_{t_0}) \le \sum_{i = t_0}^{t - 1} 2 (1 - \delta) C^*\,\tau^{2 p - 1}\right] = O(t^{-A}).
    \end{align*}
    for any constant $0 < C \le t^{- 2 p} \sum_{i = t_0}^{t - 1} 2 (1 - \delta) C^*\,\tau^{2 p - 1} + t^{- 2 p} t_0 D(G_{t_0})$ -- and such constant indeed exists since it is not hard to verify that the latter sum is non-zero and finite when $t_0 = t^{1/3}$.
\end{proof}
  
\section{Further challenges}
In this paper we focus on deriving large deviations for the average and the maximum degree in the duplication-divergence networks.
By a simple martingale argument one can show that $\Delta(G_t) / t^p$ converges to some random variable $\Delta$.
However, it is still worth asking whether $\Delta$ has finite support (e.g. dependent only on $p$ and $r$, but not on $t$).

A natural next challenge would be to obtain the exact asymptotic formula for the whole degree distribution.
For example, there is an open question whether $DD(t, p, r)$ graphs are scale-free, i.e. they have $\Theta(k^{-\gamma})$ fraction of vertices with degree $k$.
A first step towards this goal was already done for $r = 0$ in \cite{jordan,jacquet-aofa}, where it was proved that this property indeed holds for the (only) giant component $p < e^{-1}$. However, it was noticed in \cite{hermann} that for $r = 0$ and all $0 < p < 1$ such phenomenon does not appear in the whole graph, since almost all vertices are isolated, thus for any $k > 0$ the fraction of vertices of degree $k$ tends to $0$ as $t \to \infty$.

Finally, finding good bounds on the concentration of both $D(G_t)$ and $\Delta(G_t)$ is only the step towards the full understanding of this model, as we still do not know for example how symmetric such networks are. This, in turn, we believe could help find good compression algorithms for these types of networks, as was the case with other graph models \cite{chierichetti-compressible,luczak-isit}. 

\bibliographystyle{siamplain}
\bibliography{concentration}
\end{document}